\documentclass[11pt]{article}

\PassOptionsToPackage{unicode}{hyperref}
\PassOptionsToPackage{hyphens}{url}
\PassOptionsToPackage{dvipsnames,svgnames,x11names}{xcolor}

\usepackage{setspace}
\renewenvironment{abstract}{
  \begin{center}
  \textbf{\large Abstract}
  \end{center}
}{\par}

\usepackage{iftex}
\ifPDFTeX
  \usepackage[T1]{fontenc}
  \usepackage[utf8]{inputenc}
  \usepackage{textcomp} 
\else 
  \usepackage{unicode-math}
  \defaultfontfeatures{Scale=MatchLowercase}
  \defaultfontfeatures[\rmfamily]{Ligatures=TeX,Scale=1}
\fi

\usepackage{lmodern}

\usepackage{amsmath,amssymb,amsthm,dsfont,bbm}

\theoremstyle{plain}
\newtheorem{theorem}{Theorem}

\newtheorem{assumption}{Assumption}

\theoremstyle{plain}
\newtheorem{lemma}{Lemma}

\usepackage[authoryear]{natbib}

\IfFileExists{upquote.sty}{\usepackage{upquote}}{}
\IfFileExists{microtype.sty}{%
  \usepackage[]{microtype}
  \UseMicrotypeSet[protrusion]{basicmath} 
}{}

\makeatletter
\@ifundefined{KOMAClassName}{
  \IfFileExists{parskip.sty}{%
  }{
    \setlength{\parindent}{0pt}
    \setlength{\parskip}{6pt plus 2pt minus 1pt}}
}{
  \KOMAoptions{parskip=half}}
\makeatother

\usepackage[margin=1in]{geometry}
\usepackage{longtable,booktabs,array}
\usepackage{multirow}
\usepackage{float}
\usepackage{calc} 

\usepackage{soul}

\usepackage{etoolbox}
\makeatletter
\patchcmd\longtable{\par}{\if@noskipsec\mbox{}\fi\par}{}{}
\makeatother

\IfFileExists{footnotehyper.sty}{%
  \usepackage{footnotehyper}%
}{%
  \usepackage{footnote}%
}
\makesavenoteenv{longtable}

\usepackage{graphicx}
\makeatletter
\def\maxwidth{\ifdim\Gin@nat@width>\linewidth\linewidth\else\Gin@nat@width\fi}
\def\maxheight{\ifdim\Gin@nat@height>\textheight\textheight\else\Gin@nat@height\fi}
\makeatother

\setkeys{Gin}{width=\maxwidth,height=\maxheight,keepaspectratio}

\makeatletter
\def\fps@figure{htbp}
\makeatother

\newlength{\cslhangindent}
\newlength{\csllabelwidth}
\newlength{\cslentryspacingunit} 
 {%
  \setlength{\parindent}{0pt}
  \ifodd #1
    \let\oldpar\par
    \def\par{\hangindent=\cslhangindent\oldpar}
  \fi
  \setlength{\parskip}{#2\cslentryspacingunit}
 }%
 {}

\ifLuaTeX
  \usepackage{selnolig}  
\fi

\IfFileExists{bookmark.sty}{\usepackage{bookmark}}{\usepackage{hyperref}}
\IfFileExists{xurl.sty}{\usepackage{xurl}}{} 
\hypersetup{
  pdftitle={Extreme Quantile Treatment effect},
  pdfauthor={Mengran Li},
  colorlinks=true,
  linkcolor={blue},
  filecolor={Maroon},
  citecolor={Blue},
  urlcolor={blue},
  pdfcreator={LaTeX via pandoc}
}

\usepackage{xcolor}

\definecolor{green}{RGB}{34,139,34}

\definecolor{ml}{RGB}{0,128,0}

\usepackage{orcidlink}

\newcommand{\1}{\mathbf{1}}
\newcommand{\E}{\mathbb{E}}
\newcommand{\R}{\mathbb{R}}

\begin{document}
\title{Tail-Calibrated Estimation of Extreme Quantile Treatment Effects}

\author{Mengran Li* and Daniela Castro-Camilo}
\date{\emph{School of Mathematics and Statistics, University of Glasgow, UK.}}

\author{
Mengran Li\orcidlink{0009-0001-8273-9321}
\and
Daniela Castro-Camilo\orcidlink{0000-0002-7536-4613}
}

\footnotetext[1]{
\baselineskip=10pt m.li.3@research.gla.ac.uk}

\maketitle

\begin{abstract}
Extreme quantile treatment effects (eQTEs) measure the causal impact of a treatment on the tails of an outcome distribution and are central for studying rare, high-impact events.
Standard QTE methods often fail in extreme regimes due to data sparsity, while existing eQTE methods rely on restrictive tail assumptions or on interior-quantile theory.
We propose the Tail-Calibrated Inverse Estimating Equation (TIEE) framework, which combines information across quantile levels and anchors the tail using extreme value models within a unified estimating equation approach.
We establish asymptotic properties of the resulting estimator and evaluate its performance through simulation under different tail behaviours and model misspecifications. An application to extreme precipitation in the Austrian Alps illustrates how TIEE enables observational causal attribution for very rare events under anthropogenic warming.
More broadly, the proposed framework establishes a new foundation for causal inference on rare, high-impact outcomes, with relevance across environmental risk, economics, and public health.
\end{abstract}
\vspace{1cm}

\par\noindent
\textbf{Keywords}: Causal inference; Extreme quantile treatment effect; Extreme value theory; Inverse estimating equations.

\section{Introduction}\label{introduction}

Estimating causal effects at extreme quantiles is central in applications where rare, high-impact outcomes drive risk and decision-making. 
Examples arise in climate attribution, financial losses, environmental hazards and health extremes, where the interest lies in changes in the severity of the most extreme events rather than in average effects. 
A natural framework for studying distributional causal effects is that of quantile treatment effects (QTEs), widely explored in classical statistics and econometrics \citep{doksum1974empirical, lehmann1975nonparametrics}.
Important examples include that of \cite{firpo2007}, who developed semiparametric estimation of QTEs under unconfoundedness, and \citep{ma2022}, who extended Firpo's work to more complex data structures and missingness.
\cite{zhang2018} formalised the conditions under which these types of estimators retain their desired asymptotic properties and concluded that asymptotic normality can be recovered for intermediate quantiles, but inference becomes ill-posed as the target quantile moves further into the tail.
Recently, \cite{cheng2024} proposed a method for causal quantile estimation via a general inverse estimating equation framework.
Their approach provides a unifying perspective on quantile identification, but their theory primarily applies to interior quantiles.
Formal causal analysis of tail quantiles remains, therefore, methodologically challenging.

Recent approaches to causal inference at extreme quantile levels have turned to extreme value theory (EVT) to facilitate estimation and inference for extreme quantiles beyond the range of observed values. 
For heavy-tailed distributions, the tail quantile function follows a power-law representation governed by the extreme value index (EVI), which is typically estimated using classical procedures such as the Hill estimator \citep{hill1975simple}. 
Within this framework,  \cite{wang2012estimation} and \cite{xu2022extreme} extended
EVT-based methods to conditional and regression settings, while \cite{deuber2024estimation} were among the first to explicitly link EVT with causal inference for quantile treatment effects (QTEs) in observational studies.
Although this work constitutes a significant advancement, the approach relies on heavy-tailed (Fr\'echet-domain) assumptions and employs a two-step estimation procedure followed by extrapolation. 

Taken together, the literature reveals that causal identification and extreme value asymptotics have been developed largely in isolation, with existing methods treating confounding adjustment and tail extrapolation sequentially. Consequently, no unified approach currently delivers formally identified and statistically valid inference for causal effects at extreme quantiles.
To address this, we introduce the Tail-Calibrated Inverse Estimating Equation (TIEE) framework, a new inferential paradigm for causal analysis in the tails. TIEE integrates causal identification and extreme value modelling within a single estimating equation, aggregating information across tail quantile levels to yield a well-defined and stable target even in regimes where conventional methods fail.

Our contributions are fourfold. First, we propose a general class of tail-calibrated estimating equations that enables identification of extreme quantile treatment effects without restrictive tail assumptions. Second, we develop a computationally stable estimator that leverages EVT to inform extrapolation beyond observed extremes. Third, we establish identification, consistency, and asymptotic normality, with variance expressions accounting for both causal and tail uncertainty. Finally, through extensive simulations and an application to extreme precipitation in the Austrian Alps, we demonstrate substantial gains in efficiency and robustness over existing methods.

The remainder of the paper is organised as follows. Section~\ref{sec:background} reviews causal inference, extreme quantiles, and the IEE framework. Section~\ref{sec:TIEE} introduces TIEE and its integration with causal signals and EVT. Section~\ref{sec:property} presents theoretical results and inference procedures. Section~\ref{sec:simulation} reports simulations, and Section~\ref{sec:app} provides an application to Austrian precipitation data. Section~\ref{sec:conclusion} concludes.

\noindent\textbf{A note on notation.}
Throughout the paper, $D \in \{0,1\}$ denotes a binary treatment indicator, with $d \in \{0,1\}$ its generic value, and $Y_d$ the corresponding potential outcome (subscript omitted when clear). 
The vector $\pmb{X}$ denotes covariates, and the data consist of independent copies $W(i)=(Y(i),D(i),\pmb{X}(i))$, $i=1,\ldots,n$, of $W=(Y,D,\pmb{X})$.
We write $F_{Y,d}(\cdot)$ for the marginal distribution of $Y$ under treatment $d$.
Quantile levels are denoted by $\tau$ or $p \in (0,1)$, and treated as fixed. 
The corresponding quantile is $Q_{Y_d}(\tau)=\inf\{y:F_{Y,d}(y)\geq \tau\}$, with shorthand $\theta_{d,\tau} = Q_{Y_d}(\tau)$. 
When convenient, we also write $q(p)$ for a generic quantile function.


\section{Background}\label{sec:background}
Let  $Y_1(i) - Y_0(i)$ denote the causal effect of the treatment for observation $i$.
To identify and estimate treatment effects from observational data, we require the following standard assumptions.
\begin{assumption}[Identifiability conditions]\label{ass:identifiability}
We assume: (i) {Consistency}, so the observed outcome equals the corresponding potential outcome; 
(ii) {SUTVA}, ensuring no interference between units and well-defined treatment levels; 
(iii) {Unconfoundedness}, $(Y_0,Y_1)\perp D \mid \pmb{X}$, meaning that the outcomes are independent of the treatment given a set of confounders; and 
(iv) {Overlap}, $0<\mathrm P(D=1\mid \pmb{X})<1$ for all $\pmb{X}$ in the support, meaning both treatment states must be possible for every covariate pattern.
\end{assumption}

Whilst the average treatment effect (ATE), defined as $\E\{Y_1 - Y_0\}$, remains the most commonly reported causal estimand, it only characterises the location shift between treatment distributions. 
The quantile treatment effect (QTE) provides a more comprehensive characterisation of distributional differences and is defined as
\begin{equation}\label{eq:qte}
\delta(\tau) = Q_{Y_1}(\tau) - Q_{Y_0}(\tau) = \theta_{1,\tau} - \theta_{0,\tau}
\quad \tau\in[0,1].
\end{equation}

\subsection{Existing approaches to quantile treatment effect estimation}
\label{sec:QTE_summary}
Without imposing distributional assumptions, \cite{firpo2007} proposed a QTE estimator for $\tau\in (0, 1)$ obtained by reweighting the quantile loss function as
\begin{equation*}\label{eq:firpo}
\widehat{Q}_{Y_d}(\tau) = \underset{q \in \R}{\arg\min} \sum_{i=1}^n w_{d,i} \rho_{\tau}(Y_d(i) - q),
\end{equation*}
where $\rho_{\tau}(u)=u(\tau - \1(u < 0))$ is the check function and $w_{d,i}$ 
are inverse probability weights (IPWs) constructed from the propensity score, defined as the conditional probability of receiving treatment given the observed covariates. 
An alternative approach estimates QTE via conditional distributions, as in \cite{zhang2012}. 
Specifically, the conditional distribution of the potential outcome given covariates is modelled parametrically, and the marginal distribution under treatment is obtained by averaging the fitted conditional distribution over the empirical distribution of the covariates. The desired quantile is then obtained by inverting this estimated marginal distribution. In \cite{zhang2012}, this approach is implemented by assuming a Gaussian model for the outcome after a Box–Cox transformation.

\cite{ma2022} propose estimating QTEs by linking marginal quantiles to conditional quantile regression. The approach assumes a linear quantile regression model for the conditional outcome distribution given covariates. The marginal quantile is then characterised through a moment condition that connects the marginal distribution with the conditional distribution via the law of iterated expectations. In practice, the conditional expectation is approximated by integrating over conditional quantile levels obtained from the quantile regression model. The resulting estimating equation is solved after averaging over the empirical distribution of the covariates, with the integration typically restricted to a trimmed quantile range to avoid instability near the extremes. This approach produces estimators with smaller standard errors than those of \cite{firpo2007}.

\subsection{Extreme quantile treatment effects}
Whilst QTE estimators provide a comprehensive overview of treatment effect distributions, standard asymptotic properties no longer hold when $\tau$ approaches 0 or 1. 
Considering the lower tail and following \citet{zhang2018}, we distinguish three asymptotic regimes for extreme quantile levels $\tau_n\to0$, defined by how fast the effective sample size $n\tau_n$ shrinks: intermediate ($\tau_n\to 0 \text{ and } n\tau_n\to \infty$), moderately extreme ($\tau_n\to 0 \text{ and } n\tau_n\to d>0$), and extreme ($\tau_n\to 0 \text{ and } n\tau_n\to 0$).
The first regime still admits Gaussian limits for suitably normalised estimators, whereas in the latter two regimes, empirical quantiles become unstable and classical root-$n$ theory fails.
\cite{zhang2018} established that Firpo's (\citeyear{firpo2007}) estimator is asymptotically normal for intermediate quantiles and developed $b$-out-of-$n$ bootstrap procedures for moderately extreme cases. 
However, empirical quantile-based methods become unreliable when $n\tau_n\to 0$, as few or no observations fall in the tail region.
In the supplementary materials~\citep{zhang2018supplement}, the author also proposes a Pickands-type estimator for the extreme value indices (EVIs) of the potential outcome distributions, adapted to the causal inference setting in which potential outcomes are only partially observed.
The EVI quantifies the heaviness or lightness of the tail of a distribution.
Specifically, in EVT, one typically assumes an i.i.d. sample $Y(1)\ldots, Y(n)$ from distribution $G$ that is in the max-domain of attraction of a generalised extreme value distribution, meaning there exist sequences of constants $\{a_n\ge0\}$ and $\{b_n\}$ such that
\begin{equation}\label{eq:maxstab}
\lim_{n\to\infty}G^n(a_ny+b_n)=\exp\left\{-(1+\gamma y)^{-1/\gamma}\right\},
\end{equation}
where $\gamma$ is the EVI. 
Heavy-tailed distributions correspond to $\gamma>0$, light-tailed to $\gamma=0$, and bounded tails to $\gamma<0$.
In \cite{zhang2018}, if $\hat q_d(\tau)$ denote the estimator of the $\tau$-th quantile of $Y_d$, the EVI $\gamma_d$ is estimated as
\begin{equation}
\hat{\gamma}_d
=
-\sum_{r=1}^{R} w_r
\frac{
\log\!\left(
\hat q_d(m l^{r}\tau_n)-\hat q_d(l^{r}\tau_n)
\right)
-
\log\!\left(
\hat q_d(m l^{r-1}\tau_n)-\hat q_d(l^{r-1}\tau_n)
\right)
}{\log l},
\end{equation}
where $l>0$ is a spacing parameter, $m>1$, $\{w_r\}_{r=1}^{R}$ are weights satisfying $\sum_{r=1}^{R} w_r=1$, and $\tau_n$ is an intermediate quantile index.

To enable estimation beyond the range of observed data, \cite{bhuyan2023} reconstruct the counterfactual distribution via a bulk–tail model for conditional quantiles with a data-driven threshold \(\tau_u\), followed by integration over covariates and inversion. 
This requires the correct specification of the conditional quantile process and repeated tail modelling across covariates.
By contrast, \cite{deuber2024estimation} integrate extreme value theory with causal inference via causal Hill estimators for the EVI.
Specifically, they  proposed to use
\begin{equation*}
\begin{aligned}
\widehat{\gamma}_0^H&:=\frac{1}{n \tau_n} \sum_{i=1}^n\left(\log Y(i)-\log \widehat{q}_0(1-\tau_n)\right) \frac{1-D(i)}{1-\widehat{\pi}(\boldsymbol{X}_i)} \1_{Y(i)>\widehat{q}_0(1-\tau_n)},\\
\widehat{\gamma}_1^H&:=\frac{1}{n \tau_n} \sum_{i=1}^n\left(\log Y(i)-\log \widehat{q}_1(1-\tau_n)\right) \frac{D(i)}{\widehat{\pi}(\boldsymbol{X}_i)} \1_{Y(i)>\widehat{q}_1(1-\tau_n)},
\end{aligned}
\end{equation*}
where $\widehat{\pi}(\boldsymbol{X}_i)$ denotes the estimated propensity score. Extreme quantiles are then obtained by extrapolating intermediate quantiles,
\begin{equation}
\widehat{q}_d(1-p_n)=\widehat{q}_d(1-\tau_n)\left(\frac{\tau_n}{p_n}\right)^{\widehat{\gamma}_d^H},
\end{equation}
where $\tau_n \to 0$, $n\tau_n \to \infty$ and $p_n \to 0$, $np_n \to d \geq 0$. 
This extrapolation is valid only for heavy-tailed distributions ($\gamma>0$), limiting the approach to that setting.

\subsection{The inverse estimating equation framework}
\label{sec:IEE_summary}

The inverse estimating equation (IEE) framework introduced by~\citet{cheng2024} provides a general strategy for estimating quantiles of potential outcome distributions by exploiting the defining relationship between quantiles and threshold-transformed means. 
For any threshold $\theta \in \R$, 
\begin{equation}\label{eq:tau0_IEE}
\tau_d(\theta)
=
\E\!\left[\1\{Y_d \le \theta\}\right]
=
\Pr(Y_d \le \theta ), 
\end{equation}
which is the distribution function of $Y_d$ evaluated at $\theta$.
Quantiles are defined as the inverse of $\tau_d(\theta)$.
Under Assumption~\ref{ass:identifiability} and the additional requirement of point identification in \cite{cheng2024}, $\tau_d(\theta)$ is identified from observed data through a signal function $s(W,\theta,\gamma)$ satisfying $\mathbb{E}\!\left[s_d(W,\theta,\zeta)\right] = \tau_d(\theta),$ for all $\theta\in\mathbb{R},$
where $W=(Y,D,\pmb{X})$ and $\zeta$ denotes nuisance parameters. 
Defining the estimating function $g_d(W,\tau,\theta,\zeta)=s_d(W,\theta,\zeta)-\tau$, the identity $\E[g_d(W,\tau,\theta,\zeta)]=0$ holds if and only if $\tau=\tau_d(\theta)$.
Quantiles therefore follow by inversion. Let $\theta_{d,q}^\star=Q_{Y_d}(q)$ denote the $q$-quantile of $Y_d$. By definition,
\begin{equation}\label{eq:IEE_population}
\E[g_d(W,q,\theta^\star_{d,q},\zeta)]=0,
\end{equation}
so the quantile is characterised as the root of the estimating equation, which in practice is solved using its empirical analogue.
For interior quantiles, this inversion is stable and standard asymptotics apply. Near the extremes, however, data become sparse, and the density may vanish, so small perturbations in the estimating equation can cause large changes in the root, making the inversion ill-conditioned and invalidating classical asymptotics.

\section{Tail-calibrated inverse estimating equation (TIEE) framework}\label{sec:TIEE}
This section introduces our framework for estimating extreme quantile treatment effects by reformulating the inverse estimating equation as an integrated moment condition. We impose the following regularity conditions.

\begin{assumption}[Regularity conditions on potential outcomes]\label{ass:regularity}
For each $d \in \{0,1\}$: (i) $Y_d$ is continuously distributed with density $f_{Y,d}$; 
(ii) $f_{Y,d}(y)$ is monotone in the upper tail beyond some threshold $y_0$; 
(iii) $f_{Y,d}$ is bounded away from zero on any compact subset of the support; 
(iv) $F_{Y,d}$ lies in the maximum domain of attraction of an extreme value distribution with index $\xi_d$ (see~\eqref{eq:maxstab}).
\end{assumption}

Assumption~\ref{ass:regularity} adapts standard conditions \citep{firpo2007, cheng2024} to extreme-tail settings, relaxing the requirement that the density be bounded away from zero and providing a semi-parametric characterisation of tail behaviour that supports consistent extreme value approximations.

\subsection{From inverse estimating equations to tail calibration}\label{sec:tiee_integral}
The classical inverse estimating equation (IEE) framework identifies quantiles via the mapping $\theta \mapsto \tau_d(\theta)=\E[\1\{Y_d\leq\theta\}]$, which is strictly increasing, with $\theta_{d,\tau}$ defined by $\tau_d(\theta)=\tau$. 
Estimation proceeds by constructing a zero-mean estimating equation at level $\tau$ and inverting it pointwise.
This approach relies on local regularity. Indeed, the distribution function must vary sufficiently near $\theta_{d,\tau}$ and enough data must be available in its neighbourhood. These conditions fail in the tail, where the effective sample size collapses and $\tau_d(\theta)$ becomes nearly flat, rendering the inversion ill-conditioned and unstable.
The TIEE framework addresses this limitation by replacing pointwise inversion with a distribution-level representation that aggregates information across quantile levels.
Specifically, note that we can write~\eqref{eq:tau0_IEE} as
\begin{equation*}
    \tau_d(\theta) = \int_0^1\1\{Q_{Y_d}(p)\leq\theta\}\,dp,\quad \theta\in\mathbb{R},
\end{equation*}
which expresses the distribution function as the measure of quantile levels whose values lie below $\theta$.
This identity holds independently of local density behaviour and does not rely on pointwise inversion.
Motivated by this representation, we define the random integrated signal $S_d(W,\theta)$ and its population counterpart $S_d(\theta)$ as
\begin{align}\label{eq:Sd}
    S_d(W,\theta)
&:=
\int_0^1 s_{\mathrm{TIEE},d}(W,\theta,\zeta,p)\,dp,\nonumber\\
S_d(\theta)
&:=
\mathbb{E}\!\left[S_d(W,\theta)\right] = \int_0^1\E[s_{\mathrm{TIEE},d}(W,\theta,\zeta,p)]\,dp,
\end{align}
where $\zeta$ denotes nuisance parameters and the last equality in~\eqref{eq:Sd} is true under mild integrability conditions (e.g., $\E[\int_0^1|s_{\mathrm{TIEE},d}(W,\theta,\zeta,p)|\,dp] <\infty$).
The identifying requirement of the TIEE framework is the \emph{integrated unbiasedness condition}
\begin{equation}\label{eq:unbiasedness}
    S_d(\theta) = \int_0^1\E[s_{\mathrm{TIEE},d}(W,\theta,\zeta,p)]\,dp= \tau_d(\theta)\quad\text{for all } \theta\in\mathbb{R}.
\end{equation}

Under the condition in~\eqref{eq:unbiasedness}, identification is now achieved through a global probability balance rather than a local condition at a single quantile level.
The extreme quantile $\theta_{d,\tau}$ is therefore characterised as the unique solution to the integrated estimating equation
\begin{equation}\label{eq:int_est_eq}
    S_d(\theta) =\tau \quad\text{or equivalently}\quad \E[g_{\text{TIEE},d}(W,\theta,\tau,\zeta)] = 0,
\end{equation}
with $g_{\text{TIEE},d}(W,\theta,\tau,\zeta) = S_d(W,\theta) - \tau.$
This reformulation preserves the core logic of the IEE framework (quantiles are still obtained as roots of estimating equations) while stabilising the inversion by aggregating information across quantile levels.
As a result, the estimating equation remains well-defined even when empirical information near the target quantile is sparse. 
The integrated formulation naturally induces a decomposition into a data-rich body region and a tail region that requires extrapolation.
Indeed, $S_d(W,\theta)$ admits a natural decomposition into a body component (integrating over $p \leq u$) and a tail component (integrating over $p > u$), for some threshold $u \in (0,1)$.
We can then express the condition in~\eqref{eq:int_est_eq} as
$$\E\!\left[g_{\mathrm{TIEE},d}(W,\theta,\tau,\zeta)
\right] = \E\left[\int_0^1 \left\{s_{\mathrm{TIEE},d}(W, \theta_{d,\tau}, \zeta, p) - \tau_{\mathrm{eff}}\right\} \mathrm{d}p\right] = 0,$$
where $\tau_{\mathrm{eff}}=\frac{\tau-{p_u}}{1-{p_u}} = \Pr(Y_{d}\leq \theta_{d,\tau}\mid Y_d>u)$ denotes the quantile level relative to the tail distribution.

\subsection{Causal signal functions} \label{sec:tiee_signal}
The integrated estimating equation in Section~\ref{sec:tiee_integral} is agnostic to how the distribution of $Y_d$ is identified. Causal identification enters only through the choice of signal function, leaving the estimator’s structure unchanged and allowing flexibility across causal designs.
When the unbiasedness condition in~\eqref{eq:unbiasedness} holds, the signal $s_{\text{TIEE}}$ recovers the marginal distribution of $Y_d$ upon integration over $p$, and the integrated estimating equation in~\eqref{eq:int_est_eq} yields the target quantile. This condition can be achieved through different signal constructions; we describe two canonical examples relevant to our applications.

\noindent\textbf{Naïve signal under randomisation.}
In a randomised experiment, treatment assignment is independent of the potential outcomes. In this setting, a valid signal function is given by the indicator itself,
\begin{equation}\label{eq:signal_naive_rewrite}
s_{\text{TIEE}}(W,\theta,p)
\;=\;
\1\{Q_{Y_d}(p) \le \theta\}.
\end{equation}
This signal trivially satisfies the unbiasedness condition in~\eqref{eq:unbiasedness}, but it is not valid in observational studies where treatment assignment depends on covariates.

\noindent\textbf{Inverse probability weighted signal.}
In observational settings, causal identification can be achieved under unconfoundedness and overlap assumptions through inverse probability weighting. Let $\pi_d(\pmb{X};\zeta) = \Pr(D=d \mid \pmb{X})$ denote the propensity score. 
A valid TIEE signal is then given by
\begin{equation*}\label{eq:signal_ipw_rewrite}
s_{\text{TIEE}}(W,\theta,\zeta,p)
\;=\;
\frac{\1\{D=d\}}{\pi_d(\pmb{X};\zeta)}\,\1\{Q_{Y_d}(p)\le \theta\}.
\end{equation*}
Under standard causal assumptions, this signal satisfies~\eqref{eq:unbiasedness} and recovers the marginal distribution of $Y_d$ upon integration over $p$.

\subsection{Tail modelling and extrapolation}
\label{sec:tiee_tail}

The integrated estimating equation requires a model-assisted representation of the signal in the extreme tail, where data are scarce. When the target quantile exceeds a high threshold $u$, extrapolation becomes unavoidable. We address this by embedding an extreme value model directly into the estimating equation.
Specifically, under Assumption~\ref{ass:regularity}, exceedances above $u$ follow a generalised Pareto distribution, yielding the quantile representation
\begin{equation}
Q_{Y_d}(\tau) = u + \frac{\sigma_d}{\xi_d} \left[\left(\frac{1 - p_u}{1 - \tau}\right)^{\xi_d}- 1\right],
\end{equation}
for $Q_{Y_d}(\tau)>u$, where $p_u=\Pr(Y_d\le u)$. This motivates the effective tail probability $\tau_{\mathrm{eff}}=(\tau-p_u)/(1-p_u)$ introduced in Section~\ref{sec:tiee_integral}.

Within the TIEE framework, the GPD enters through the signal rather than as a plug-in estimator. The signal uses empirical indicators in the body and model-assisted counterparts in the tail. For the IPW signal in Section~\ref{sec:tiee_signal}, this yields
\[
s_d(W,\theta,\zeta)
=
\frac{\1\{D=d\}}{\pi_d(X;\zeta)}\,\1\{\tilde Y_d \le \theta\},
\]
where $\tilde Y_d$ is generated from the fitted GPD (e.g.\ $\tilde Y_d=Q_{Y_d}(U)$ with $U\sim \mathrm{Uniform}(0,1)$ or via transformed residuals). This preserves the probabilistic interpretation while enabling evaluation beyond the observed range. In practice, $\tilde Y_d$ is obtained via regression models on covariates (Sections~\ref{sec:simulation} and~\ref{sec:app}).

\subsection{Numerical solution for the TIEE estimator $\hat\theta_{d,\tau}$}
\label{sec:tiee_numerical}

The TIEE estimator $\hat\theta_{d,\tau}$ is defined as the solution to the empirical analogue of the integrated estimating equation introduced in Section~\ref{sec:tiee_integral}. Specifically, letting
\begin{equation}\label{eq:phi_n}
    \Phi_{d,n}(\theta)
=
\frac{1}{n}
\sum_{i=1}^n
g_{\mathrm{TIEE},d}(W_i,\theta,\hat\zeta,p)=\frac{1}{n}
\sum_{i=1}^n\int_0^1 s_{\mathrm{TIEE},d}(W,\theta,\zeta,p)\,dp,
\end{equation}
the estimator satisfies
$\Phi_n(\hat\theta_{d,\tau}) = 0.$
In practice, the integral over $p\in[0,1]$ is approximated using a grid $\{p_k\}_{k=0}^K$ with $K$ sufficiently large. The empirical estimating equation is then approximated by
\begin{equation}\label{eq:phi_n_discrete}
\frac{1}{n}
\sum_{i=1}^n
\sum_{k=1}^K
s_{\mathrm{TIEE},d}(W,\theta,\zeta,p)\,(p_k-p_{k-1})
= 0.
\end{equation}
We use $K=800$ for $n=1000$ and $K=2000$ for $n=5000$; sensitivity to $K$ is assessed in Appendix~\ref{app:sensitive_u}.

Direct root-finding is often unstable due to the non-smoothness induced by indicator functions, particularly in the tail. Following \cite{ma2022}, we instead solve an equivalent convex optimisation problem, defining an objective $L_{d,n}(\theta)$ whose subgradient is proportional to $\Phi_{d,n}(\theta)$. A convenient choice is
\begin{equation*}\label{eq:Ln_def}
L_{d,n}(\theta)
=
-
\sum_{i=1}^n
\sum_{k=1}^K
(p_k-p_{k-1})\,\tilde{S}_d(W_i,\theta,\hat\zeta,p_k)
+
\left|
R^\ast - \theta
\right|
\sum_{i=1}^n
\sum_{k=1}^K
\tau_{\mathrm{eff}}\,(p_k-p_{k-1}),
\end{equation*}
where $\tilde{S}_d(W,\theta,\zeta,p_k)$ is an antiderivative of the signal function with respect to $\theta$ at $p_k$, i.e., it satisfies
$\frac{\partial}{\partial \theta} \tilde{S}_d(W,\theta,\zeta,p_k) = s_{\mathrm{TIEE},d}(W,\theta,\zeta,p_k),$
and $R^\ast$ is a sufficiently large constant ensuring boundedness of the optimisation domain. The TIEE estimator is then obtained as
$\widehat\theta_{d,\tau} = \widehat{Q}_{Y_d}(\tau) = \arg\min_{\theta \in \Theta} L_{d,n}(\theta)$.
This convex $L_1$-type formulation ensures numerical stability and can be solved efficiently, even for extreme quantiles.


\section{Theoretical properties}\label{sec:property}

In this section, we establish the theoretical foundations of the TIEE framework. Our study centres on the TIEE estimating function $g_{\mathrm{TIEE},d}(W,\theta,\tau,\zeta)$ defined in~\eqref{eq:int_est_eq}, and its associated population integrated moment $\Phi_d(\theta)=\mathbb{E}[g_{\mathrm{TIEE},d}(W,\theta,\tau,\zeta)]$. 
Note that by construction, $\Phi_d(\theta) = S_d(\theta) - \tau$, where $S_d(\theta)=\mathbb{E}[S_d(W,\theta)]$ is the integrated signal defined in the second line of~\eqref{eq:Sd}.
Thus, $\Phi_d(\theta)$ is simply the centred version of $S_d(\theta).$
We assume that $\theta\in\Theta$, where $\Theta\subset\mathbb{R}$ is a compact parameter space.

\subsection{Identification, uniqueness, consistency and asymptotic normality}\label{sec:asymp_results}

Identification hinges on two crucial properties of the integrated moment function $g_{\mathrm{TIEE},d}(W,\theta,\tau,\zeta)$. 
First, that it correctly recovers the quantile condition when evaluated at $\theta_{d,\tau}$, and second, that it is strictly monotonic in $\theta$.

\begin{theorem}[Identification and Uniqueness]\label{thm:id}
Let $Y$ be a real-valued random variable with continuous marginal distribution function. Suppose the target quantile $\theta_{d,\tau}$ is uniquely defined by $F_{Y_d}(\theta_{d,\tau})=\tau$ for $\tau\in(0,1)$. Assume that for every $(W,\tau,\zeta)$ the function $g_{\mathrm{TIEE},d}(W,\theta,\tau,\zeta)$ is measurable and strictly increasing in $\theta$, and that the mapping $\Phi_d(\theta)=\E[g_{\mathrm{TIEE},d}(W,\theta,\tau,\zeta)]$ is well defined on $\Theta$ and differentiable in $\theta$. Then $\Phi_d(\theta_{d,\tau})=0$ and $\Phi_d(\theta)$ is strictly increasing in $\theta$. Consequently, the equation $\Phi_d(\theta)=0$ admits the unique solution $\theta=\theta_{d,\tau}$.

\end{theorem}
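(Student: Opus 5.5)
The argument has three steps: first show that $\Phi_d$ vanishes at $\theta_{d,\tau}$, then that it is strictly increasing, and finally combine the two.

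\textbf{Step 1: the root.} By construction $\Phi_d(\theta)=S_d(\theta)-\tau$. The integrated unbiasedness condition~\eqref{eq:unbiasedness} gives $S_d(\theta)=\tau_d(\theta)=F_{Y_d}(\theta)$ for every $\theta$. The interchange of expectation and the $p$-integral is justified by the integrability condition stated after~\eqref{eq:Sd}. Evaluating at $\theta_{d,\tau}$ and using the hypothesis $F_{Y_d}(\theta_{d,\tau})=\tau$, which relies on continuity of the marginal distribution function, yields $\Phi_d(\theta_{d,\tau})=0$.

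\textbf{Step 2: strict monotonicity.} Fix $\theta_1<\theta_2$ in $\Theta$. By hypothesis, pointwise in $(W,\tau,\zeta)$,
\[
\Delta(W):=g_{\mathrm{TIEE},d}(W,\theta_2,\tau,\zeta)-g_{\mathrm{TIEE},d}(W,\theta_1,\tau,\zeta)>0 .
\]
Since $\Phi_d$ is well defined at both points, $\Delta$ is integrable and $\Phi_d(\theta_2)-\Phi_d(\theta_1)=\E[\Delta(W)]$. A nonnegative random variable with zero expectation vanishes almost surely. Because $\Delta>0$ everywhere, this forces $\E[\Delta(W)]>0$, so $\Phi_d$ is strictly increasing.

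This is the delicate step. Strict pointwise positivity is needed; weak monotonicity of the indicators would only give $\E[\Delta]\ge 0$. I would therefore state explicitly that the argument uses only the stated strict pointwise monotonicity of $g_{\mathrm{TIEE},d}$ together with the positivity of expectation. As a consistency check, Step 1 also identifies $\Phi_d=F_{Y_d}-\tau$, and $F_{Y_d}$ is strictly increasing on the support under Assumption~\ref{ass:regularity}(i),(iii). Differentiability of $\Phi_d$ is not needed for uniqueness. It is only used later, where $\Phi_d'(\theta_{d,\tau})=f_{Y,d}(\theta_{d,\tau})\ge 0$ feeds into the asymptotic theory.

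\textbf{Step 3: uniqueness.} Suppose $\Phi_d(\theta')=0$ for some $\theta'\neq\theta_{d,\tau}$. Strict monotonicity from Step 2 gives $\Phi_d(\theta')\neq\Phi_d(\theta_{d,\tau})=0$, a contradiction. Hence $\theta_{d,\tau}$ is the unique root of $\Phi_d(\theta)=0$ in $\Theta$.
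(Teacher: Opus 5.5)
Your proposal is correct and follows essentially the same route as the paper's proof. The root comes from the integrated unbiasedness condition, strict monotonicity of $\Phi_d$ is inherited from the strict pointwise monotonicity of $g_{\mathrm{TIEE},d}$, and uniqueness follows; your Step~2 also spells out why $\E[\Delta(W)]>0$ (a nonnegative variable with zero mean vanishes almost surely), which the paper asserts without justification.
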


\begin{proof}
By the unbiasedness property of the signal function (see \eqref{eq:unbiasedness}), we have
\[
\Phi_d(\theta_{d,\tau})
=
\E\!\left[g_{\mathrm{TIEE},d}(W,\theta_{d,\tau},\tau,\zeta)\right]
=
\E\!\left[\int_0^1 \left\{ s(W,\theta_{d,\tau},\zeta,p)-\tau_{\text{eff}} \right\}\,\mathrm{d}p \right]
=
0.
\]
Moreover, $\Phi_d(\theta)$ is strictly increasing in $\theta$ because 
$g_{\mathrm{TIEE},d}(W,\theta,\tau,\zeta)$ is strictly increasing in $\theta$. 
Therefore, the equation $\Phi_d(\theta)=0$ admits at most one root. 
Since $\Phi_d(\theta_{d,\tau})=0$, this root must coincide with $\theta_{d,\tau}$.
\end{proof}

Having established that the true extreme quantile $\theta_{d,\tau}$ is uniquely identified as the root of the population integrated moment $\Phi_d(\theta) = 0$, we now turn to the convergence of its empirical counterpart.
The following Lemma is required to prove consistency, and its proof is deferred to Appendix~\ref{app:proofs}.

\begin{lemma}[Glivenko--Cantelli property of the integrated signal]\label{lemma_gc}
For $\theta \in \Theta$, let $S_{d}({W},\theta)$ be defined as in~\eqref{eq:Sd}, with $s_{\mathrm{TIEE},d}$ defined as in~\eqref{eq:signal_naive_rewrite}.
Then the class of functions $\mathcal{F}
\;=\;
\bigl\{\, S_d(\cdot,\theta) : \theta \in \Theta \,\bigr\}$
is Glivenko--Cantelli.
\end{lemma}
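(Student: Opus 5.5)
The plan is to verify finite $L_1(P)$-bracketing numbers for $\mathcal{F}$ and then invoke the standard bracketing Glivenko--Cantelli theorem: if $N_{[\,]}(\varepsilon,\mathcal{F},L_1(P))<\infty$ for every $\varepsilon>0$, then $\mathcal{F}$ is Glivenko--Cantelli (e.g.\ van der Vaart, Thm.~19.4). The structure of the naive signal~\eqref{eq:signal_naive_rewrite} makes this natural. For each fixed $p$ and $W$, the map $\theta\mapsto \1\{Q_{Y_d}(p)\le\theta\}$ is nondecreasing and takes values in $\{0,1\}$. Hence $\theta\mapsto S_d(W,\theta)=\int_0^1 \1\{Q_{Y_d}(p)\le \theta\}\,dp$ is nondecreasing and $[0,1]$-valued for every $W$. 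So $\mathcal{F}$ is a one-parameter family, monotone in the index, with the constant envelope $1$. Here $Q_{Y_d}(p)$ may depend on $W$, for instance through the fitted tail model or $\tilde Y_d$. Joint measurability of $(W,p)\mapsto \1\{Q_{Y_d}(p)\le\theta\}$ gives measurability of each $S_d(\cdot,\theta)$ by Fubini. Integrability is automatic, since the integrand is bounded.

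The key steps, in order, are as follows.

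First, record monotonicity and boundedness as above.

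Second, identify the mean function. By~\eqref{eq:unbiasedness}, $\theta\mapsto \E[S_d(W,\theta)]$ equals $\tau_d(\theta)=F_{Y,d}(\theta)$, which is continuous by Assumption~\ref{ass:regularity}(i). Take $\Theta=[a,b]$ compact (otherwise enlarge it to its convex hull).

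Third, fix $\varepsilon>0$. By uniform continuity of $F_{Y,d}$ on $[a,b]$, choose a grid $a=\theta_0<\theta_1<\dots<\theta_m=b$ with $F_{Y,d}(\theta_j)-F_{Y,d}(\theta_{j-1})\le\varepsilon$ for all $j$. We can take $m\le \lceil 1/\varepsilon\rceil+1$.

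Fourth, define the brackets $[\,S_d(\cdot,\theta_{j-1}),\,S_d(\cdot,\theta_j)\,]$. By monotonicity, every $S_d(\cdot,\theta)$ with $\theta\in[\theta_{j-1},\theta_j]$ lies in the $j$th bracket pointwise in $W$. Its $L_1(P)$-size is
\[
\E\bigl[S_d(W,\theta_j)-S_d(W,\theta_{j-1})\bigr]=F_{Y,d}(\theta_j)-F_{Y,d}(\theta_{j-1})\le\varepsilon,
\]
where the absolute value can be dropped because the difference is nonnegative. Thus $N_{[\,]}(\varepsilon,\mathcal{F},L_1(P))\le m<\infty$, and the bracketing theorem gives $\sup_{\theta\in\Theta}|\mathbb{P}_n S_d(\cdot,\theta)-\E S_d(W,\theta)|\to 0$ almost surely.

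The main obstacle I expect is the second step: justifying that the mean increments are small. This requires continuity of $\theta\mapsto\E S_d(W,\theta)$, which rests on~\eqref{eq:unbiasedness} together with continuity of $F_{Y,d}$. If $Q_{Y_d}(p)$ is allowed to depend on $W$ in a way that creates atoms in $\theta$, I would fall back on a more robust construction. Pick grid points where $\E S_d$ increases by at most $\varepsilon$, and use left limits $S_d(\cdot,\theta_j-)$ as upper brackets wherever $\E S_d$ jumps. This is the usual argument for empirical distribution functions and still yields finitely many $\varepsilon$-brackets. A secondary point is that $\mathcal{F}$ is indexed by a monotone family, so no outer-measurability issues arise: the supremum over $\Theta$ can be reduced to a supremum over a countable dense subset together with one-sided limits.
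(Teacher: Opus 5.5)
Your proof is correct, but it takes a different route from the paper's.

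\emph{The paper's route.} It works at the level of the integrand. For each fixed $p$, the threshold indicators $\{W\mapsto\1\{Q_{Y_d}(p)\le\theta\}:\theta\in\Theta\}$ form a VC class and are therefore Glivenko--Cantelli. Integration over $p$, being a bounded linear operation on uniformly bounded functions, is then asserted to pass this property on to $\mathcal F$.

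\emph{Your route.} You never fix $p$. You use only that $\theta\mapsto S_d(W,\theta)$ is nondecreasing and $[0,1]$-valued, and you build finitely many $L_1(P)$-brackets whose sizes are increments of the mean function. This is exactly the classical Glivenko--Cantelli argument for distribution functions.

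\emph{What yours buys.} It is more direct and needs less. There is no VC theory, and no need to justify the passage from fixed $p$ to the integral. The paper only sketches that step; it can be completed via $\sup_\theta|(\mathbb{P}_n-P)S_d(\cdot,\theta)|\le\int_0^1\sup_\theta|(\mathbb{P}_n-P)\1\{Q_{Y_d}(p)\le\theta\}|\,dp$, together with joint measurability in $(p,W)$ and dominated convergence. Your left-limit fallback also shows that neither unbiasedness nor continuity of $F_{Y,d}$ is actually needed. The same construction covers the IPW signal too, since nonnegative weights preserve monotonicity in $\theta$ and an integrable envelope suffices.

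\emph{What the paper's buys.} Its modular approach does not rely on monotonicity in a scalar index. It therefore extends to non-monotone integrands and to classes indexed jointly by $(\theta,\zeta)$. The latter is what would be needed to make rigorous the ``uniformly $o_P(1)$'' replacement of $\zeta$ by $\hat\zeta$ in Theorem~\ref{thm:consistency}.

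\emph{A side remark.} Read literally, $Q_{Y_d}(p)$ in~\eqref{eq:signal_naive_rewrite} is the population quantile. Then $S_d(W,\theta)=F_{Y,d}(\theta)$ for every $W$, so $\mathcal F$ consists of constants and the lemma is trivial. Both arguments are really aimed at an integrand that depends on $W$, and yours handles that case cleanly.
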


\begin{theorem}[Consistency]\label{thm:consistency}
Assume the conditions of Theorem~\ref{thm:id}. In addition, suppose that the covariate $\boldsymbol{X}$ has compact support, that for fixed $p$ the mapping $\theta \mapsto \1\{Q_{Y_d}(p)\leq\theta\}$ is monotonic and piecewise constant, and that the nuisance parameter estimates $\hat{\zeta}$ converge in probability to $\zeta$.
Then, the TIEE estimator $\hat{\theta}_{d,\tau}$ that solves the empirical estimating equation $\Phi_{d,n}(\theta) = 0$ is consistent for the true extreme quantile $\theta_{d,\tau}$, i.e.,
$\hat{\theta}_{d,\tau} \xrightarrow{P} \theta_{d,\tau}$.
\end{theorem}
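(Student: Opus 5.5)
The proof will follow the standard argmin/Z-estimator route for monotone estimating equations (in the style of van der Vaart's Theorem 5.9, or Newey--McFadden's consistency theorem for monotone criteria). The strategy is to show that $\Phi_{d,n}$ converges uniformly to $\Phi_d$ on $\Theta$, and then to use the well-separated unique root supplied by Theorem~\ref{thm:id} to force $\hat\theta_{d,\tau}$ into any neighbourhood of $\theta_{d,\tau}$.

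\textbf{Step 1 (uniform convergence at the true nuisance).} Lemma~\ref{lemma_gc} says the class $\{S_d(\cdot,\theta):\theta\in\Theta\}$ is Glivenko--Cantelli. Hence
\[
\sup_{\theta\in\Theta}\Bigl|\tfrac1n\textstyle\sum_i S_d(W_i,\theta)-S_d(\theta)\Bigr|\to 0
\]
almost surely, and therefore $\sup_\theta|\Phi_{d,n}(\theta;\zeta)-\Phi_d(\theta)|\to0$. For the IPW signal, the lemma is stated for the naïve signal. I would extend it by noting that $\1\{D=d\}/\pi_d(\pmb X;\zeta)$ is a fixed function bounded by overlap and compactness of the support of $\pmb X$. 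Multiplying a GC class by a single bounded measurable function preserves the GC property.

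\textbf{Step 2 (nuisance replacement).} I will bound
\[
\sup_\theta|\Phi_{d,n}(\theta;\hat\zeta)-\Phi_{d,n}(\theta;\zeta)| \le \max_i\bigl|\pi_d(\pmb X_i;\hat\zeta)^{-1}-\pi_d(\pmb X_i;\zeta)^{-1}\bigr|.
\]
This holds because the integrated indicator $\int_0^1\1\{\cdot\le\theta\}\,dp$ lies in $[0,1]$. Assuming $\pi_d$ is continuous in $\zeta$ uniformly over the compact support of $\pmb X$ and bounded away from $0$ by overlap, the right-hand side is $o_P(1)$ by $\hat\zeta\xrightarrow{P}\zeta$ and the continuous mapping theorem. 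This step uses the compact support and nuisance consistency assumptions. Combining Steps 1 and 2 gives $\sup_\theta|\Phi_{d,n}(\theta)-\Phi_d(\theta)|=o_P(1)$.

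\textbf{Step 3 (identifiable uniqueness and conclusion).} Fix $\varepsilon>0$. By Theorem~\ref{thm:id}, $\Phi_d$ is strictly increasing with its unique zero at $\theta_{d,\tau}$. Define $\eta=\min\{-\Phi_d(\theta_{d,\tau}-\varepsilon),\Phi_d(\theta_{d,\tau}+\varepsilon)\}>0$. Monotonicity of $\Phi_d$ makes this a separation bound, $\inf_{|\theta-\theta_{d,\tau}|\ge\varepsilon}|\Phi_d(\theta)|\ge\eta$, without any need for continuity beyond Theorem~\ref{thm:id}. On the event $\sup_\theta|\Phi_{d,n}-\Phi_d|<\eta/2$, we get $\Phi_{d,n}(\theta_{d,\tau}-\varepsilon)<0<\Phi_{d,n}(\theta_{d,\tau}+\varepsilon)$. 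The empirical map $\Phi_{d,n}$ is monotone in $\theta$, since for each fixed $p$ the map $\theta\mapsto\1\{Q_{Y_d}(p)\le\theta\}$ is monotone and piecewise constant, as assumed. So any (generalised) root, or the minimiser of the convex $L_{d,n}$ whose subgradient is proportional to $\Phi_{d,n}$, must lie in $(\theta_{d,\tau}-\varepsilon,\theta_{d,\tau}+\varepsilon)$. The probability of this event tends to one, which yields $\hat\theta_{d,\tau}\xrightarrow{P}\theta_{d,\tau}$.

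\textbf{Main obstacle.} I expect the main difficulty to be the piecewise-constant nature of $\Phi_{d,n}$. An exact root may not exist, so $\hat\theta_{d,\tau}$ has to be interpreted as an approximate zero with $|\Phi_{d,n}(\hat\theta_{d,\tau})|=o_P(1)$, or as a sign-change point. I would handle this by noting that jumps are of size $O_P(1/n)$, or by using the subgradient-of-$L_{d,n}$ characterisation. The monotone sandwich argument in Step 3 then works without exact roots. A secondary subtlety is that the GPD-generated $\tilde Y_d$ makes the tail signal depend on the fitted tail parameters. I would absorb these parameters into $\hat\zeta$ and treat them exactly as in Step 2, using continuity of the GPD quantile map in $(\sigma_d,\xi_d)$.
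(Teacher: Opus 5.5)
Your proposal is correct and takes the same route as the paper: it invokes van der Vaart's Theorem~5.9, gets uniform convergence of $\Phi_{d,n}$ to $\Phi_d$ from Lemma~\ref{lemma_gc} plus a uniformly $o_P(1)$ error from replacing $\zeta$ by $\hat\zeta$, and concludes from the unique zero supplied by Theorem~\ref{thm:id}. What you add are details the paper only asserts: extending the Glivenko--Cantelli property to the IPW-weighted signal, an explicit bound for the nuisance replacement (where compact support and overlap are actually used), and a monotone sign-change argument that gives well-separation and avoids needing an exact root of the piecewise-constant $\Phi_{d,n}$.
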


\begin{proof}
Consistency follows from the general theory of Z-estimators
\citep[Theorem~5.9]{van2000asymptotic}. Unique identification holds by
Theorem~\ref{thm:id}. For uniform convergence, Lemma~\ref{lemma_gc}
implies that the class $\mathcal F=\{S_d(\cdot,\theta):\theta\in\Theta\}$
is Glivenko--Cantelli. Since
$g_{\mathrm{TIEE},d}(W,\theta,\tau)=S_d(W,\theta)-\tau$, the class
$\{g_{\mathrm{TIEE},d}(\cdot;\theta,\tau):\theta\in\Theta\}$ is also
Glivenko--Cantelli, yielding
\[
\sup_{\theta\in\Theta}
\left|
\frac{1}{n}\sum_{i=1}^n g_{\mathrm{TIEE},d}(W_i,\theta,\tau)
-
\E[g_{\mathrm{TIEE},d}(W,\theta,\tau)]
\right|
\xrightarrow{P}0 .
\]
Replacing $\zeta$ by $\hat\zeta$ introduces a uniformly $o_P(1)$ error,
so $\sup_{\theta\in\Theta}|\Phi_{d,n}(\theta)-\Phi_d(\theta)|\xrightarrow{P}0$.
Since $\Phi_d$ has a unique zero at $\theta_{d,\tau}$, it follows that
$\hat{\theta}_{d,\tau}\xrightarrow{P}\theta_{d,\tau}$.
\end{proof}
Theorem~\ref{thm:consistency} establishes that the TIEE estimator is consistent even when the target quantile $\tau$ is extreme and lies beyond the range of observed data. This result is non-trivial because it holds despite the vanishing density at $\theta_{d,\tau}$, a consequence of the EVT-based regularisation embedded in the integrated moment condition.
Building on this result, we derive the asymptotic distribution of the TIEE under suitable regularity conditions. The complete proof, including the treatment of nuisance parameter estimation via the functional delta method \citep{van2000asymptotic}, is given in Appendix~\ref{app:proofs}.

\begin{theorem}[Asymptotic Normality]\label{thm:asymptotic}
Assume the conditions of Theorems~\ref{thm:id} and~\ref{thm:consistency}. Suppose further that $\Phi_d(\theta)$ is continuously differentiable in a neighbourhood of $\theta_{d,\tau}$ with $\Phi_d'(\theta_{d,\tau})\neq0$, that $\sqrt{n}(\hat{\zeta}-\zeta)=O_p(1)$ and $\Phi_{d,n}(\theta)$ is pathwise differentiable in $\zeta$, and that $\sigma_{d,\tau}^2:=\mathrm{Var}[g_{\mathrm{TIEE},d}(W,\tau,\theta_{d,\tau},\zeta)]$ is finite. Then
\begin{equation}\label{eq:asymptotic_normality}
\sqrt{n}(\hat{\theta}_{d,\tau}-\theta_{d,\tau})
\xrightarrow{d}
\mathcal{N}\!\left(0,V_{d,\tau}\right),
\qquad
V_{d,\tau}=\frac{\sigma_{d,\tau}^2}{[\Phi_d'(\theta_{d,\tau})]^2}.
\end{equation}
\end{theorem}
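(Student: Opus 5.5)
The plan is the standard Z-estimator expansion of \citet[Theorem~5.21 and Section~5.3]{van2000asymptotic}, applied to $\Phi_{d,n}$ with a separate term for the estimated nuisance $\hat\zeta$. By Theorem~\ref{thm:consistency}, $\hat\theta_{d,\tau}\xrightarrow{P}\theta_{d,\tau}$, so it suffices to work in a shrinking neighbourhood of $\theta_{d,\tau}$ on which $\Phi_d$ is continuously differentiable with $\Phi_d'(\theta_{d,\tau})\neq 0$.

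We split the empirical equation into three pieces. Write $\mathbb{P}_n$ for the empirical measure and $P$ for the true law. Then
\[
0=\Phi_{d,n}(\hat\theta_{d,\tau})
=\mathbb{P}_n g(\cdot,\theta_{d,\tau},\zeta)
+\bigl\{\Phi_d(\hat\theta_{d,\tau})-\Phi_d(\theta_{d,\tau})\bigr\}
+R_{1,n}+R_{2,n},
\]
where
\[
R_{1,n}=(\mathbb{P}_n-P)\bigl\{g(\cdot,\hat\theta_{d,\tau},\zeta)-g(\cdot,\theta_{d,\tau},\zeta)\bigr\}
\]
is the empirical-process remainder, and $R_{2,n}$ is the effect of replacing $\zeta$ by $\hat\zeta$. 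We used $\Phi_d(\theta_{d,\tau})=0$ from Theorem~\ref{thm:id}.

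\textbf{Leading term.} The first piece satisfies $\sqrt n\,\mathbb{P}_n g(\cdot,\theta_{d,\tau},\zeta)\xrightarrow{d}\mathcal N(0,\sigma^2_{d,\tau})$ by the ordinary CLT. This uses i.i.d.\ sampling, the finite variance $\sigma^2_{d,\tau}$, and mean zero.

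\textbf{Linearisation.} The second piece equals $\Phi_d'(\theta_{d,\tau})(\hat\theta_{d,\tau}-\theta_{d,\tau})+o_P(|\hat\theta_{d,\tau}-\theta_{d,\tau}|)$, by continuous differentiability and the mean value theorem.

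\textbf{Nuisance term.} The pathwise differentiability of $\Phi_{d,n}$ in $\zeta$ gives $R_{2,n}=\dot\Phi_{\zeta}(\hat\zeta-\zeta)+o_P(n^{-1/2})$, with $\sqrt n(\hat\zeta-\zeta)=O_P(1)$. To match the stated variance, which does not include a nuisance correction, I would argue that the derivative of $\zeta\mapsto \E\,g$ at the truth is zero, in the spirit of Neyman orthogonality. For the IPW signal this means that, after integrating over $p$, the derivative of $\E[\1\{D=d\}\pi_d^{-1}\cdots]$ in $\zeta$ must vanish. If it does not vanish, the delta-method term has to be absorbed into $\sigma^2_{d,\tau}$ by treating $g$ as the influence-function-augmented estimating function. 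I would state this reading explicitly.

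\textbf{Main obstacle: $\sqrt n R_{1,n}=o_P(1)$.} This stochastic equicontinuity step is where I expect the difficulty. The Glivenko--Cantelli property in Lemma~\ref{lemma_gc} gives only uniform consistency, so it must be strengthened to a Donsker property. I would first try to show that $\{S_d(\cdot,\theta):\theta\in\Theta\}$ is a VC-type class. Each $S_d(W,\theta)$ is an integral over $p$ of indicators that are monotone in $\theta$, so $\theta\mapsto S_d(W,\theta)$ is monotone and bounded (by $1/\min\pi_d$ in the IPW case, using overlap and compact support of $\pmb X$). Monotone bounded one-parameter families have bracketing entropy of order $1/\epsilon$, which makes the class Donsker.

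I would then establish $L_2$-continuity:
\[
\E\bigl[\{S_d(W,\theta)-S_d(W,\theta_{d,\tau})\}^2\bigr]\lesssim |\Phi_d(\theta)-\Phi_d(\theta_{d,\tau})|\to0,
\]
using boundedness and monotonicity. Together with consistency of $\hat\theta_{d,\tau}$, this yields asymptotic equicontinuity and hence $\sqrt n R_{1,n}=o_P(1)$ by \citet[Lemma~19.24]{van2000asymptotic}.

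\textbf{Conclusion.} Combining the pieces gives
\[
\sqrt n(\hat\theta_{d,\tau}-\theta_{d,\tau})\{\Phi_d'(\theta_{d,\tau})+o_P(1)\}=-\sqrt n\,\mathbb{P}_n g(\cdot,\theta_{d,\tau},\zeta)+o_P(1+\sqrt n|\hat\theta_{d,\tau}-\theta_{d,\tau}|).
\]
This first yields $\sqrt n$-consistency, and then the stated limit with $V_{d,\tau}=\sigma^2_{d,\tau}/[\Phi_d'(\theta_{d,\tau})]^2$ by Slutsky's lemma.

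Two further technicalities need care. The estimator solves $\Phi_{d,n}=0$ only approximately, since $\Phi_{d,n}$ is a step function; the subgradient characterisation of $L_{d,n}$ gives $\Phi_{d,n}(\hat\theta_{d,\tau})=O_P(1/n)$, which is $o_P(n^{-1/2})$. The grid approximation over $p$ is handled by taking $K\to\infty$ fast enough that the discretisation error is $o(n^{-1/2})$.
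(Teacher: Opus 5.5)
Your proof is sound, subject to the same orthogonality proviso the paper relies on, but it takes a different route.

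\textbf{How the paper argues.} The paper applies the mean value theorem directly to the empirical map $\theta\mapsto\Phi_{d,n}(\theta,\hat\zeta)$. It writes $\sqrt n(\hat\theta_{d,\tau}-\theta_{d,\tau})=-\sqrt n\,\Phi_{d,n}(\theta_{d,\tau},\hat\zeta)/\partial_\theta\Phi_{d,n}(\tilde\theta,\hat\zeta)$ and asserts that the empirical derivative converges uniformly to $\Phi_d'$. It then expands the numerator in $\zeta$ and concludes by the CLT and Slutsky.

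\textbf{How your argument differs.} You never differentiate the empirical criterion. You linearise only the population map $\Phi_d$, which is the one object the theorem assumes to be $C^1$. You control $R_{1,n}$ by stochastic equicontinuity, upgrading Lemma~\ref{lemma_gc} from Glivenko--Cantelli to Donsker via monotonicity in $\theta$ and the bound $\E[\{S_d(W,\theta)-S_d(W,\theta_{d,\tau})\}^2]\lesssim|\Phi_d(\theta)-\Phi_d(\theta_{d,\tau})|$.

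This is what makes your argument work for indicator-type signals. There $\Phi_{d,n}$ is a step function in $\theta$; the paper itself notes this non-smoothness when motivating the convex reformulation. So the mean-value identity and the convergence of $\partial_\theta\Phi_{d,n}$ that the paper uses are not available without additional smoothing. Your version needs only an approximate zero, which you handle, along with the grid error. The cost is the extra empirical-process work and a uniform bound on the inverse weights (strict overlap).

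\textbf{Two small refinements.} First, for a family that is monotone in a scalar index, the $L_2(P)$ bracketing number is $O(\epsilon^{-2})$, so the entropy is logarithmic. The $1/\epsilon$ rate you quote is for the class of all monotone functions; either suffices for Donsker. Second, your $R_{2,n}$ is evaluated at $\hat\theta_{d,\tau}$ rather than $\theta_{d,\tau}$. The $\zeta$-expansion must therefore hold uniformly for $\theta$ near $\theta_{d,\tau}$, e.g.\ via a Donsker condition jointly in $(\theta,\zeta)$.

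\textbf{The nuisance term.} You are right to flag this rather than assume it away. Nothing in the stated hypotheses yields $\nabla_\zeta\Phi_d(\theta_{d,\tau},\zeta)=0$. For the IPW signal with an estimated propensity score, and an indicator signal, this gradient is $-\E\bigl[\1\{D=d\}\,\dot\pi_d(\pmb{X};\zeta)\,\pi_d(\pmb{X};\zeta)^{-2}\,\1\{Y\le\theta_{d,\tau}\}\bigr]$, with $\dot\pi_d$ the gradient of $\pi_d$ in $\zeta$. It is generally nonzero.

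The paper's proof simply asserts local unbiasedness in $\zeta$. Its own appendix gives the corrected variance $\{\sigma_{d,\tau}^2+A_\zeta V_\zeta A_\zeta^\top-2A_\zeta C\}/[\Phi_d'(\theta_{d,\tau})]^2$ for the case where that fails. So $V_{d,\tau}=\sigma_{d,\tau}^2/[\Phi_d'(\theta_{d,\tau})]^2$ holds only under the orthogonality condition, or with $g_{\mathrm{TIEE},d}$ replaced by its nuisance-augmented version. That is exactly the reading you propose to state explicitly.
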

The asymptotic variance in~\eqref{eq:asymptotic_normality} highlights several key features of the TIEE estimator. 
First, unlike empirical quantiles, it remains finite even in extreme regimes where $f_Y(\theta_{d,\tau}) \to 0$ as $\tau \to 1$. 
This reflects the EVT-based regularisation induced by integrating the estimating equation over quantile levels, which stabilises tail inference by borrowing strength from neighbouring quantiles.
Second, the construction is also compatible with classical extreme value limits, yielding asymptotic normality under intermediate sequence regimes where empirical quantiles typically fail.
Third, the variance $V_{d,\tau}$ depends on the derivative of the population moment function, which, for the signal functions in Section~\ref{sec:tiee_signal}, coincides with the density of the potential outcome at the target quantile, $\Phi_d'(\theta)=f_{Y_d}(\theta)$.
Finally, $\sigma_{d,\tau}^2$ also captures uncertainty from estimating nuisance parameters $\zeta$. When the propensity score is correctly specified, the IPW-based TIEE estimator is asymptotically efficient.

\subsection{Inference for the extreme quantile treatment effect $\delta(\tau)$}\label{sec:eqte_inference}

Based on the results from Section~\ref{sec:asymp_results}, we derive an expression for the asymptotic distribution of the extreme quantile treatment effect $\delta(\tau) = \theta_{1,\tau} - \theta_{0,\tau}$ (as defined in~\eqref{eq:qte}).
Standard Z-estimation arguments yield the joint asymptotic normality
\[
\sqrt{n}
\begin{pmatrix}
\hat{\theta}_{1,\tau} - \theta_{1,\tau} \\
\hat{\theta}_{0,\tau} - \theta_{0,\tau}
\end{pmatrix}
\xrightarrow{d}
\mathcal{N}\!\left(0, D^{-1} \Sigma_\Phi D^{-1}\right),
\]
where $D = \mathrm{diag}\!\big(\Phi'_1(\theta_{1,\tau}), \Phi'_0(\theta_{0,\tau})\big)$ and $\Sigma_\Phi = \text{Var}\!\big((g_1(W,\theta_{1,\tau}),\, g_0(W,\theta_{0,\tau}))^\top\big)$ with componentes $\Sigma_{dd'}$.
It follows that $\sqrt{n}\big(\hat{\delta}(\tau) - \delta(\tau)\big) \xrightarrow{d}\mathcal{N}\!\left(0,\; \sigma_\delta^2\right),$ where
\begin{equation}\label{eq:eqte_variance}
    \sigma_\delta^2
=
\frac{\Sigma_{11}}{[\Phi'_1(\theta_{1,\tau})]^2}
+
\frac{\Sigma_{00}}{[\Phi'_0(\theta_{0,\tau})]^2}
-
\frac{2\Sigma_{10}}{\Phi'_1(\theta_{1,\tau})\Phi'_0(\theta_{0,\tau})}.
\end{equation}
A consistent estimator $\hat{\sigma}_\delta^2$ is obtained via sample analogues (see Appendix~\ref{app:variance}).
The terms $\Phi'_d(\theta_{d,\tau})$ can be estimated via numerical differentiation or density estimation.

\section{Simulation Study} \label{sec:simulation}

To evaluate the finite-sample performance of the proposed TIEE framework, we conduct a simulation study with three objectives: (i) to compare the inverse probability weighted TIEE estimator (TIEE-IPW) with existing methods, including the Zhang--Firpo estimator \citep{zhang2018}, the Hill causal estimator \citep{deuber2024estimation}, and the Pickands estimator \citep{zhang2018supplement}; (ii) to examine performance across different tail regimes (heavy- versus light-tailed distributions); and (iii) to assess robustness of TIEE-IPW under propensity score misspecification.

We consider sample sizes $n\in\{1000,5000\}$ and report bias and mean squared error (MSE) based on $1000$ Monte Carlo replications. The extreme quantile level is defined through the tail probability $1-\tau_n$, allowing us to study three regimes commonly encountered in extreme quantile estimation. Specifically, we set $(1-\tau_n)\in\{5/n,\,1/n,\,5/(n\log n)\}$, corresponding to intermediate ($n(1-\tau_n)\to\infty$), moderately extreme ($n(1-\tau_n)\to d>0$), and very extreme ($n(1-\tau_n)\to0$) regimes.

\subsection{Data-Generating Process}

The data-generating process is identical across scenarios. We generate $n$ i.i.d.\ observations $W(i)=(Y(i),D(i),\pmb{X}(i))$, where $\pmb{X}(i)=(1,X(i))^\top$ and $X(i)\sim\text{Uniform}(-1,1)$. The propensity score for the binary treatment $D=1$ is
\begin{equation}\label{eq:simulation_ps}
\pi(\pmb{x})=\mathbb{P}(D=1\mid\pmb{X}=\pmb{x})=0.5x^2+0.25.
\end{equation}
Treatment assignments are generated as $D(i)\mid\pmb{X}(i)\sim\text{Ber}(\pi(\pmb{X}(i)))$, and the potential outcomes $Y_1$ and $Y_0$ follow the tail-regime specifications described below.

Following \cite{deuber2024estimation}, the intermediate quantile level associated with the GPD threshold is set to $(1-p_u)=k/n$ with $k=n^{0.65}$. Although not optimal in all settings, this choice provides a common benchmark. The tail of the reconstructed potential outcome distribution is obtained via GPD regression on exceedances above the threshold, with GPD parameters modelled as functions of covariates.
Additional sensitivity analyses for the threshold $u$ and grid size $K$ are provided in Appendix~\ref{app:sensitive}.

\subsection{Heavy-tailed scenarios}\label{sec:simulation_heavy}

This set of scenarios focuses on cases where the underlying potential outcome distributions $F_{Y_d}$ belong to the Fréchet maximum domain of attraction, characterised by heavy tails. The three scenarios vary the heterogeneity of the EVIs between the treatment and control groups. 
Following \cite{deuber2024estimation}, the potential outcomes are generated from the following models:
\[
\begin{aligned}
M_1^{\text{(H)}}:\;&
\begin{cases}
Y_1 = 5S(1+X),\\
Y_0 = S(1+X),
\end{cases}
\;
M_2^{\text{(H)}}:\;&
\begin{cases}
Y_1 = C_2\exp(X),\\
Y_0 = C_3\exp(X),
\end{cases}
\;
M_3^{\text{(H)}}:\;&
\begin{cases}
Y_1 = P_{1.75+X,2},\\
Y_0 = P_{1.75+X,1},
\end{cases}
\end{aligned}
\]
where $X$ is defined as above, $S$ follows a Student-$t$ distribution with 3 degrees of freedom, $C_s$ is Fréchet distributed with shape parameter $s$, location 0, and scale 1, and $P_{a,b}$ is Pareto distributed with shape parameter $a$ and scale $b$. The associated EVIs for each scenario are $\gamma_1 = \gamma_0 = 1/3$ for model $M_1^{\text{(H)}}$, $\gamma_1 = 1/2$ and $\gamma_0 = 1/3$ for model $M_2^{\text{(H)}}$, and $\gamma_1 = \gamma_0 = 4/7$ for model $M_3^{\text{(H)}}$.

\subsection{Light-tailed scenarios}\label{sec:simulation_light}

Here we focus on scenarios where the underlying distribution $F_{Y_d}$ belong to the Gumbel or Weibull maximum domain of attraction, corresponding to light-tailed distributions. 
Here we expect EVT-based estimators that primarily target heavy-tailed data (such as the Hill estimator method) to not perform very well. 
The potential outcomes are generated from the following three models:
\[
\begin{aligned}
M_1^{\text{(L)}}:
\begin{cases}
Y(1) = 5S(1+X),\\
Y(0) = S(1+X),
\end{cases}
\;
M_2^{\text{(L)}}:
\begin{cases}
Y(1) = C_1\exp(X),\\
Y(0) = C_2\exp(X),
\end{cases}
\;
M_3^{\text{(L)}}:
\begin{cases}
Y(1) = W_{2+X,\,2},\\
Y(0) = W_{3+2X,\,1}.
\end{cases}
\end{aligned}
\]
where $X$ is defined as above, $S$ follows a standard normal distribution, $C_1 \sim \text{Exp}(1)$, $C_2 \sim \text{Exp}(2)$, and $W_{a,b}$ denotes a Weibull distribution with shape parameter $a$ and scale $b$. 
The associated EVIs for each scenario are $\gamma_1 = \gamma_0 = 0$ for models $M_1^{\text{(L)}}$ and $M_2^{\text{(L)}}$ (corresponding to the Gumbel domain of attraction), and $\gamma_1 = -1/(2+X)$ and $\gamma_0=-1/(3+2X)$ for model $M_3^{\text{(L)}}$.

\subsection{Results}
Table~\ref{tab:heavy_tail} summarises estimation performance under both heavy- and light-tailed scenarios for $n=1000$. 
In the heavy-tailed case, TIEE-IPW consistently achieves lower bias and variance than the Zhang-Firpo and Pickands estimators across all tail regimes. 
While the Causal Hill estimator performs reasonably well in the intermediate regime ($p_n = 5/n$), TIEE-IPW remains stable across all regimes, including the very extreme tail ($p_n = 5/(n\log n)$), where its advantage is most pronounced. 
The Pickands estimator exhibits substantially larger bias and variance throughout, particularly in the most extreme regime, reflecting its known finite-sample variability. 
The Zhang-Firpo estimator performs competitively in the intermediate regime but deteriorates in Scenario $M_2^{\text{(H)}}$, where EVI heterogeneity is strongest.
In the light-tailed case, all estimators perform markedly better, as reflected by lower MSE values. 
The relative advantage of TIEE-IPW persists but is less pronounced, consistent with the easier estimation of light-tailed extremes. 
The Pickands estimator again performs worst, with inflated MSE, especially in the very extreme tail regime.
Similar results for $n=5000$ are shown in Appendix~\ref{app:n5000}.

Figure~\ref{fig:simu_heavy_light} reports the empirical coverage of the 90\% intervals across both tail designs. 
In the heavy-tailed case, both the Causal Hill and TIEE estimators exhibit stable coverage, consistent with their compatibility with Pareto-type tails, while Zhang systematically under-covers, with deviations increasing in the extreme tail. 
Overall, TIEE achieves the most accurate and robust finite-sample coverage. Results for the Pickands estimator are omitted due to its highly unstable interval estimates.
In the light-tailed case, TIEE maintains coverage close to the nominal level across all designs and regimes. 
Zhang continues to under-cover, particularly at more extreme quantiles. 
The Causal Hill estimator shows design-dependent behaviour, with substantial under-coverage in $M_1^{\text{(L)}}$ and $M_3^{\text{(L)}}$, but near-nominal performance in $M_2^{\text{(L)}}$. 
Overall, TIEE provides the most uniform and reliable uncertainty quantification, remaining robust in settings with mild tails and limited extreme observations.

\begin{table}[H]
\centering
\scriptsize
\caption{Estimation performance (Bias and MSE) under heavy- and light-tailed scenarios for sample size $n=1000$.}
\label{tab:heavy_tail}
\begin{tabular}{llrrrrrr}
\toprule
\multicolumn{8}{c}{Heavy-tailed scenarios} \\
\midrule
 & & \multicolumn{2}{c}{\textbf{Scenario $M_1^{\text{(H)}}$}} & \multicolumn{2}{c}{\textbf{Scenario $M_2^{\text{(H)}}$}} & \multicolumn{2}{c}{\textbf{Scenario $M_3^{\text{(H)}}$}} \\
\cline{3-4}\cline{5-6}\cline{7-8}
$1-\tau_n$ & Method & Bias & MSE & Bias & MSE & Bias & MSE \\
\midrule
\multirow{4}{*}{$5/(n \log n)$}
 & Causal Hill  & 45.982 & 4828.099  & 6.370   & 1355.930   & 26.759  & 3652.382   \\
 & Pickands     & -52.322 & 5694.769 & 42.410  & 56223.057  & 63.741  & 86214.119  \\
 & Zhang-Firpo  & -3.243 & 2840.382  & 0.847   & 5395.063   & -9.793  & 6191.232   \\
 & TIEE-IPW     & -6.792 & \textbf{1513.337} & -16.460 & \textbf{581.743} & -5.879 & \textbf{544.757} \\
\cline{1-8}
\multirow{4}{*}{$1/n$}
 & Causal Hill  & 28.305  & 2329.556   & 4.388   & 844.775     & 15.660  & 1872.560    \\
 & Pickands     & -36.222 & 6194.962   & 59.070  & 110234.161  & 58.364  & 101126.968  \\
 & Zhang-Firpo  & 4.117   & 2846.812   & 8.653   & 5469.215    & -2.782  & 6103.073    \\
 & TIEE-IPW     & -4.997 & \textbf{1025.129} & -10.776 & \textbf{390.789} & -4.072 & \textbf{390.293} \\
\cline{1-8}
\multirow{4}{*}{$5/n$}
 & Causal Hill  & 4.496   & 135.273  & 0.539  & 55.963   & 2.142  & 79.982   \\
 & Pickands     & -19.168 & 899.346  & 6.520  & 1714.555 & 4.988  & 1268.937 \\
 & Zhang-Firpo  & 1.194   & 174.980  & 1.183  & 131.197  & 3.330  & 1128.588 \\
 & TIEE-IPW     & -1.193 & \textbf{101.772} & -2.066 & \textbf{52.606} & -1.680 & \textbf{68.289} \\
\midrule
\multicolumn{8}{c}{Light-tailed scenarios} \\
\midrule
 & & \multicolumn{2}{c}{\textbf{Scenario $M_1^{\text{(L)}}$}} & \multicolumn{2}{c}{\textbf{Scenario $M_2^{\text{(L)}}$}} & \multicolumn{2}{c}{\textbf{Scenario $M_3^{\text{(L)}}$}} \\
\cline{3-4}\cline{5-6}\cline{7-8}
$1-\tau_n$ & Method & Bias & MSE & Bias & MSE & Bias & MSE \\
\midrule
\multirow{4}{*}{$5/(n \log n)$}
 & Causal Hill  & 13.488  & 244.970 & 4.372  & 49.431   & 0.535  & 0.601  \\
 & Pickands     & -15.687 & 482.247 & 5.818  & 3791.332 & -1.827 & 15.487 \\
 & Zhang-Firpo  & -0.337  & 14.195  & 0.475  & 12.403   & -0.347 & 0.307  \\
 & TIEE-IPW     & -1.286 & \textbf{10.593} & -0.153 & \textbf{9.121} & -0.097 & \textbf{0.252} \\
\cline{1-8}
\multirow{4}{*}{$1/n$}
 & Causal Hill  & 10.916  & 163.065 & 3.459  & 32.668   & 0.441  & 0.447  \\
 & Pickands     & -15.195 & 408.884 & 3.743  & 2018.179 & -1.788 & 12.386 \\
 & Zhang-Firpo  & 0.406   & 14.245  & 0.847  & 12.895   & -0.266 & 0.257  \\
 & TIEE-IPW     & -1.144 & \textbf{8.332} & -0.109 & \textbf{6.573} & -0.089 & \textbf{0.196} \\
\cline{1-8}
\multirow{4}{*}{$5/n$}
 & Causal Hill  & 2.563   & 12.662  & 0.734  & 3.189   & 0.105  & 0.083 \\
 & Pickands     & -11.699 & 183.880 & -0.564 & 106.240 & -1.438 & 4.721 \\
 & Zhang-Firpo  & -0.090  & 4.283   & 0.058  & 2.774   & -0.015 & 0.089 \\
 & TIEE-IPW     & -0.488 & \textbf{2.484} & 0.173 & \textbf{1.334} & -0.063 & \textbf{0.050} \\
\bottomrule
\end{tabular}
\end{table}

\begin{figure}[htbp]
    \centering
    \includegraphics[width=\textwidth]{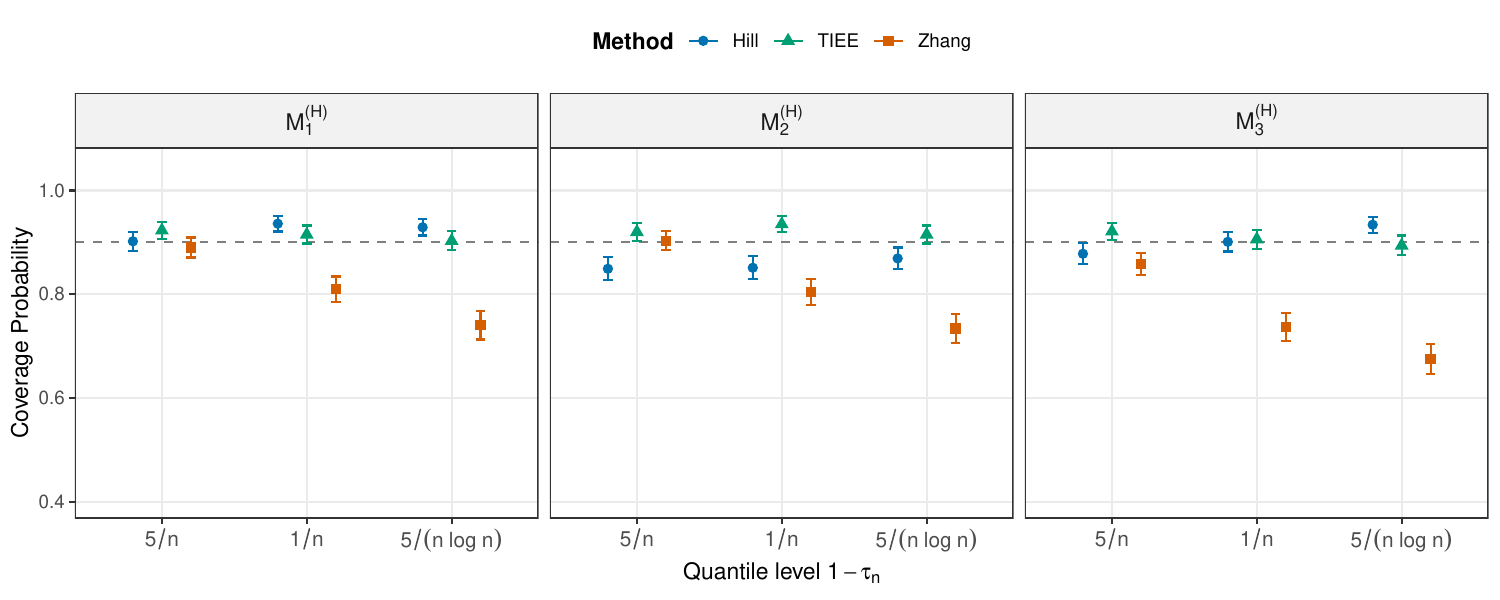}
    \includegraphics[width=\textwidth]{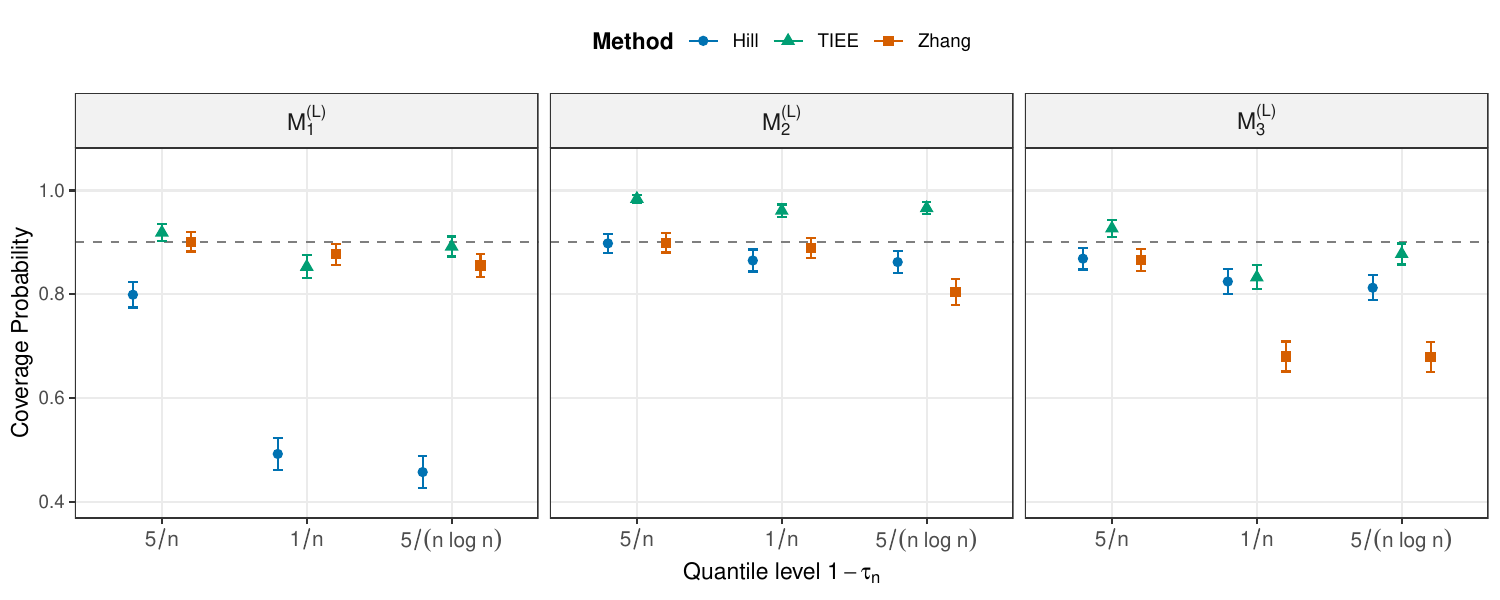}
    \caption{Coverage rates of 90\% confidence intervals for the extreme quantile treatment effect. 
    The comparison includes Zhang, Hill, and TIEE estimators with varying sample fractions ($5/n$, $1/n$, and $5/(n \log n)$). The dashed horizontal line represents the nominal confidence level. The top row shows results for the heavy-tailed scenarios, while the bottom row shows results for the light-tailed scenarios.
    } 
    \label{fig:simu_heavy_light}
\end{figure}

\subsubsection{Robustness to model misspecification\label{sec:robust}} 
In the extreme quantile regime, the outcome model is inherently misspecified, as tail behaviour cannot be learned directly from the data, so classical double robustness does not strictly apply. 
Nevertheless, simulations in Sections~\ref{sec:simulation_heavy} and~\ref{sec:simulation_light} show that TIEE remains robust to tail-model misspecification, despite imposing a GPD on non-GPD data.
We further investigate robustness to propensity score misspecification by fitting incorrect models.
Specifically, we consider three misspecification scenarios: (1) omitting the quadratic term in~\eqref{eq:simulation_ps}, (2) using a linear model when the true model is quadratic, and (3) introducing spurious interaction terms.
Table~\ref{tab:misspecify} shows that the EQTE estimator $\delta(\tau_n)$ maintains small bias and valid inference under these perturbations.
\begin{table}[htbp]
\centering
\caption{Bias and MSE of the TIEE estimator of the EQTE $\delta(\tau_n)$ under different propensity score misspecifications. The ``True Value'' refers to the EQTE computed from the underlying data-generating distributions via large-sample Monte Carlo approximation.}
\label{tab:misspecify}
\scriptsize
\begin{tabular}{cc l rr}
\toprule
$1-\tau_n$ & True Value & Propensity Score Model & Bias & MSE \\
\midrule
\multirow{5}{*}{$5/(n \log n)$} & \multirow{5}{*}{45.39} 
  & True Model (Quadratic) & -5.438 & 374.19 \\ 
  & & Polynomial Basis & -5.601 & 397.00 \\ 
  & & Misspecified Form (Linear) & -3.903 & 386.64 \\ 
  & & Misspecified Link (Logit) & -3.269 & 551.88 \\ 
  & & Spurious Covariates & -3.472 & 581.16 \\ 
\midrule
\multirow{5}{*}{$1/n$} & \multirow{5}{*}{38.38} 
  & True Model (Quadratic) & -3.837 & 275.01 \\ 
  & & Polynomial Basis & -3.838 & 283.39 \\ 
  & & Misspecified Form (Linear) & -2.516 & 291.51 \\ 
  & & Misspecified Link (Logit) & -2.199 & 361.22 \\ 
  & & Spurious Covariates & -2.103 & 370.79 \\ 
\midrule
\multirow{5}{*}{$5/n$} & \multirow{5}{*}{16.53} 
  & True Model (Quadratic) & -1.338 & 29.53 \\ 
  & & Polynomial Basis & -1.354 & 29.33 \\ 
  & & Misspecified Form (Linear) & -1.009 & 31.11 \\ 
  & & Misspecified Link (Logit) & -1.287 & 29.74 \\ 
  & & Spurious Covariates & -1.285 & 29.89 \\ 
\bottomrule
\end{tabular}
\end{table}
Taken together, these results show that TIEE remains numerically robust under misspecification of both the tail and propensity score models. While not formally doubly robust, this practical robustness is valuable in applications. 

\section{Data application}\label{sec:app}
Extreme event attribution (EEA) typically compares factual and counterfactual climates using model ensembles, either via changes in exceedance probabilities \citep{van2021pathways} or storyline approaches conditioning on atmospheric circulation~\citep{shepherd2016common}. While effective, these methods do not directly target changes in extreme quantiles.

We adopt a causal observational framework that complements EEA by focusing on extreme quantile treatment effects. Conditioning on large-scale circulation aligns with storyline attribution while enabling estimation directly from historical data.

We analyse the EEAR-Clim dataset~\citep{bongiovanni2024eear}, using seasonal 7-day maximum precipitation in the Austrian Alps as the outcome. The treatment contrasts modern (1995–2020) and historical (1950–1980) periods, targeting the $\tau=0.99$ quantile ($\approx$100-year return level under stationarity).

Two main challenges arise: data sparsity in the tail and confounding by atmospheric circulation. Valid attribution therefore requires separating thermodynamic effects from dynamical variability~\citep{shepherd2016common}. In our framework, circulation acts as a confounder, affecting both the climate regime and extreme precipitation. We address this by constructing a confounder set $\pmb{X}$ from ERA5 reanalysis data (ECMWF; \citealt{hersbach2020era5}\footnote{\textbf{Data}: Copernicus Climate Change Service (C3S) (2023). ERA5 hourly data on single levels [DOI: 10.24381/cds.adbb2d47] and pressure levels [DOI: 10.24381/cds.bd0915c6]. Accessed Sept. 2025.}), targeting the dynamical state of the atmosphere. Covariates include synoptic-scale circulation variables such as sea-level pressure (SLP) whcih characterises surface pressure systems, geopotential height at 500 hPa ($Z_{500}$) describing mid-tropospheric circulation, and wind intensity at 850 hPa ($\text{Wind}_{850}$), capturing lower-tropospheric flow.
Additionally, we include large-scale teleconnection indices (NAO and AO).
We explicitly exclude thermodynamic variables such as local temperature, which lie on the causal pathway and would induce over-adjustment, as well as slowly varying oceanic indices (e.g., AMO, PDO), whose persistence may violate the positivity assumption. 
The propensity score $\pi(\pmb{X})=\mathbb{P}(D=1|\pmb{X})$ is estimated via the follwoing logistic regression model 
\[
\log\left(\frac{\pi(\pmb{X})}{1-\pi(\pmb{X})}\right)
=
\beta_0 + \sum_{r=1}^2 \alpha_r Z_{500}^r + \sum_{r=1}^2 \gamma_r \text{SLP}^r + \delta \text{Wind}_{850} + \eta \text{NAO} + \sum_{r=1}^4 \lambda_r \text{AO}^r.
\]
This flexible specification ensures effective balancing of circulation patterns across periods, allowing us to isolate the thermodynamic contribution of anthropogenic forcing.

\subsection{Results}

We applied the TIEE, Causal Hill, and Zhang-Firpo estimators to the 23 stations listed in Table~\ref{tab:app}. 
Table~\ref{tab:app} addresses the attribution question by quantifying changes in the 99\% precipitation quantile under recent warming, conditional on comparable circulation states. 
While unadjusted and alternative estimators yield mixed and often inconclusive results, TIEE consistently identifies positive shifts in the upper tail across stations, indicating that anthropogenic warming primarily intensifies extreme precipitation rather than altering its frequency.
\begin{table}
\centering
\caption{Extreme quantile treatment effect estimates (99\% quantile) for 23 Austrian stations. Bold values in the TIEE column indicate statistical significance at the 10\% level.
\label{tab:app}}
\centering
\scriptsize{
\begin{tabular}[t]{lllll}
\toprule
Station & Empirical & Causal Hill & Zhang-Firpo & TIEE\\
\midrule
Aspang & -4.92 & -1.81 [-6.53, \phantom{-}2.90] & -3.70 [-10.05, \phantom{-}4.73] & -1.25 [-3.60, \phantom{-}1.11]\\
Bruckmur & \phantom{-}5.65 & \phantom{-}1.95 [-1.98, \phantom{-}5.89] & \phantom{-}4.20 [-5.07, \phantom{-}12.70] & \textbf{\phantom{-}3.01 [\phantom{-}0.92, \phantom{-}5.09]}\\
Deutschlandsberg & -2.67 & -4.07 [-8.31, \phantom{-}0.17] & -4.90 [-14.91, \phantom{-}3.26] & \textbf{-3.69 [-6.34, -1.04]}\\
Doellach & \phantom{-}5.22 & \phantom{-}2.85 [-0.07, \phantom{-}5.78] & \phantom{-}4.30 [-0.86, \phantom{-}11.16] & \textbf{\phantom{-}3.60 [\phantom{-}1.90, \phantom{-}5.30]}\\
Feuerkogel & \phantom{-}0.82 & \phantom{-}6.13 [-0.15, \phantom{-}12.42] & \phantom{-}7.90 [\phantom{-}2.64, \phantom{-}14.76] & \phantom{-}1.30 [-1.89, \phantom{-}4.49]\\
\addlinespace
Freistadt & \phantom{-}3.21 & \phantom{-}5.04 [\phantom{-}0.51, \phantom{-}9.57] & \phantom{-}4.30 [-1.51, \phantom{-}10.27] & \textbf{\phantom{-}5.41 [\phantom{-}2.84, \phantom{-}7.98]}\\
Graz Universitaet & -1.62 & -2.34 [-6.33, \phantom{-}1.66] & -3.10 [-10.90, \phantom{-}8.90] & -1.53 [-3.89, \phantom{-}0.82]\\
Hohenau & -0.25 & \phantom{-}2.32 [-1.21, \phantom{-}5.85] & \phantom{-}1.90 [-4.82, \phantom{-}6.78] & \phantom{-}0.47 [-1.41, \phantom{-}2.36]\\
Kremsmuenster Tawes & -3.84 & -2.05 [-6.58, \phantom{-}2.48] & -0.50 [-4.95, \phantom{-}3.42] & \textbf{-3.61 [-6.05, -1.17]}\\
Landeck & \phantom{-}1.45 & \phantom{-}0.94 [-1.45, \phantom{-}3.33] & -0.20 [-2.72, \phantom{-}2.93] & \textbf{\phantom{-}2.97 [\phantom{-}1.52, \phantom{-}4.43]}\\
\addlinespace
Mariazellst. Sebastian Flugfeld & -0.49 & \phantom{-}2.39 [-2.58, \phantom{-}7.36] & \phantom{-}0.10 [-3.42, \phantom{-}2.40] & \phantom{-}1.24 [-1.15, \phantom{-}3.63]\\
Mayrhofen & \phantom{-}1.34 & \phantom{-}0.44 [-2.60, \phantom{-}3.48] & \phantom{-}1.20 [-6.11, \phantom{-}8.63] & \phantom{-}0.68 [-0.94, \phantom{-}2.30]\\
Muerzzuschlag & \phantom{-}9.24 & \phantom{-}8.55 [\phantom{-}4.86, \phantom{-}12.24] & \phantom{-}6.70 [\phantom{-}0.51, \phantom{-}11.46] & \textbf{\phantom{-}8.99 [\phantom{-}7.00, \phantom{-}10.99]}\\
Patscherkofel & \phantom{-}3.11 & \phantom{-}1.61 [-1.32, \phantom{-}4.54] & \phantom{-}2.50 [-0.02, \phantom{-}5.86] & \textbf{\phantom{-}1.96 [\phantom{-}0.33, \phantom{-}3.60]}\\
Puchbergschneeberg & \phantom{-}6.27 & \phantom{-}8.18 [\phantom{-}3.66, \phantom{-}12.70] & \phantom{-}8.30 [-1.06, \phantom{-}16.81] & \textbf{\phantom{-}7.25 [\phantom{-}4.89, \phantom{-}9.62]}\\
\addlinespace
Retzwindmuehle & \phantom{-}6.88 & \phantom{-}6.17 [\phantom{-}3.12, \phantom{-}9.22] & \phantom{-}7.00 [\phantom{-}4.14, \phantom{-}9.23] & \textbf{\phantom{-}5.43 [\phantom{-}3.80, \phantom{-}7.07]}\\
Reutte & \phantom{-}4.51 & \phantom{-}8.94 [\phantom{-}3.49, \phantom{-}14.40] & \phantom{-}9.70 [-7.58, \phantom{-}22.47] & \textbf{\phantom{-}5.46 [\phantom{-}2.64, \phantom{-}8.29]}\\
Schoppernau & \phantom{-}7.25 & \phantom{-}8.42 [\phantom{-}3.04, \phantom{-}13.81] & \phantom{-}11.00 [\phantom{-}0.28, \phantom{-}21.58] & \textbf{\phantom{-}5.41 [\phantom{-}2.86, \phantom{-}7.96]}\\
Schroecken & \phantom{-}3.05 & \phantom{-}4.00 [-0.91, \phantom{-}8.90] & \phantom{-}4.00 [-4.87, \phantom{-}20.37] & \textbf{\phantom{-}3.08 [\phantom{-}0.47, \phantom{-}5.68]}\\
St. Jakobdef. & \phantom{-}0.41 & -2.32 [-5.74, \phantom{-}1.11] & -0.40 [-6.15, \phantom{-}3.68] & -0.24 [-2.15, \phantom{-}1.67]\\
\addlinespace
St. Poeltenlandhaus & \phantom{-}4.23 & \phantom{-}3.87 [-1.33, \phantom{-}9.07] & \phantom{-}5.20 [-3.28, \phantom{-}12.88] & \phantom{-}1.65 [-0.92, \phantom{-}4.21]\\
Stift Zwettl & \phantom{-}4.45 & \phantom{-}3.44 [-0.53, \phantom{-}7.41] & \phantom{-}5.30 [\phantom{-}1.74, \phantom{-}9.44] & \textbf{\phantom{-}2.24 [\phantom{-}0.09, \phantom{-}4.39]}\\
Waizenkirchen & -4.84 & -3.10 [-6.83, \phantom{-}0.63] & -1.30 [-7.78, \phantom{-}4.26] & \textbf{-4.88 [-6.90, -2.87]}\\
\bottomrule
\end{tabular}}
\end{table}
The results highlight substantial discrepancies between naive and causally adjusted estimates. 
Unadjusted empirical comparisons frequently suggest negative effects (e.g., Aspang $-4.92$~mm, Waizenkirchen $-4.84$~mm), which we attribute to circulation-driven variability rather than true thermodynamic responses. 
In contrast, TIEE corrects these distortions and reveals coherent positive signals.

TIEE also demonstrates greater power and precision. 
At Landeck, it detects a significant increase ($\hat{\delta}_{\mathrm{TIEE}}(0.99)=2.97$~mm), whereas Zhang-Firpo suggests a decrease ($-0.20$~mm), illustrating how inefficient estimators can obscure the signal. 
More broadly, Zhang-Firpo produces very wide intervals (e.g., $\approx 18$~mm at Bruckmur), while TIEE yields substantially narrower intervals (typically $\approx 4$--$5$~mm) and identifies significant effects at stations such as Bruckmur, Freistadt, and Schoppernau. 
The Causal Hill estimator generally recovers the correct direction but suffers from large variance, often failing to detect significance.

Aggregating across stations, TIEE reveals a coherent regional signal. 
While empirical estimates vary widely due to local dynamical noise, TIEE produces consistently positive effects, with an average intensification of $\overline{\hat{\delta}}_{\mathrm{TIEE}}(0.99)\approx 2.18$~mm, exceeding and stabilising estimates from Zhang ($\approx 1.75$~mm) and Causal Hill ($\approx 1.78$~mm). 
It also reduces apparent spatial heterogeneity, suggesting that much of the variability in competing estimators reflects noise rather than genuine differences. 
Overall, once dynamical confounding is properly addressed, a robust thermodynamic signal of intensification emerges across the region.


\section{Conclusion} \label{sec:conclusion}
Estimating causal effects at extreme quantiles is central in applications such as climate science and econometrics, but standard methods fail in the tails. 
We introduce the Tail-Calibrated Inverse Estimating Equation (TIEE) framework, which integrates causal inference and extreme value theory via an integrated moment condition, yielding stable estimation without plug-in inversion. While implemented with a GPD, the framework is not model-specific and accommodates any uniformly consistent tail approximation.
We establish identification, consistency, and asymptotic normality, and simulations show substantial efficiency gains and robustness to propensity score misspecification. 
The method is compatible with flexible nuisance estimation, including machine learning approaches. 
An application to Austrian precipitation data reveals a clear intensification of extreme events after adjusting for atmospheric circulation.

Future work includes extensions to high-dimensional confounding, leveraging flexible machine learning estimators, and to multivariate or spatial extremes beyond the current univariate setting. 
The framework also extends naturally to continuous treatments by replacing the binary signal with a continuous treatment analogue based on the generalised propensity score or related doubly robust constructions~\citep{hirano2004propensity,kennedy2017non}, yielding an unbiased estimating function for the distribution function of the potential outcome at a given treatment level. 

More broadly, the TIEE framework addresses causal inference in sparse tails and has applications beyond climate science. 
In quantitative finance, it can be used to isolate the impact of policy interventions on systemic risk measures such as Value-at-Risk, and in epidemiology, to quantify intervention effects on peak infection rates, where average-based metrics can obscure tail behaviour. 
These extensions highlight the broad utility of tail-calibrated inference beyond environmental sciences.




\section{Acknowledgments}
Mengran Li gratefully acknowledges the support provided by the China Scholarship Council (CSC). The authors thank members of the Glasgow-Edinburgh Extremes Network (GLE$^2$N) for useful discussions related to this work.

\section{Data availability}
The data underlying this article are derived from publicly available sources. 
The ERA5 reanalysis data were obtained from the Copernicus Climate Change Service (C3S) Climate Data Store (https://cds.climate.copernicus.eu), including ERA5 hourly data on single levels (DOI:10.24381/cds.adbb2d47) and pressure levels (DOI:10.24381/cds.bd0915c6). 
The EEAR-Clim dataset is available from Bongiovanni et al. (2025). 
Derived data generated during the analysis are available from the corresponding author upon reasonable request.
The code to reproduce the simulation studies in Section~\ref{sec:simulation}, and the data and code to reproduce the data application in Section~\ref{sec:app} are freely available at \url{https://github.com/MengranLi-git/TIEE}.

\appendix

\section{Additional results and proofs from Section 4}\label{app:proofs}

\subsection{Proof of Lemma 1}

For each fixed $p\in(0,1)$, the mapping
$W \mapsto \1\{Q_{Y_d}(p)\le \theta\}$
is an indicator function indexed by the scalar threshold parameter $\theta$.
Such threshold indicator functions form a VC-subgraph class of finite VC dimension.
VC-subgraph classes have polynomial covering numbers and therefore satisfy the conditions of the Glivenko--Cantelli theorem~\citep[Sections~2.4 and 2.6]{vaart2023empirical}.
The integrated signal $S_{d}$ is obtained by integrating these indicator functions with respect to $p$ over a finite measure on $(0,1)$. 
Since the integrand is uniformly bounded and integration is a linear and bounded operator, the resulting class $\mathcal{F}$ inherits the Glivenko--Cantelli property~\citep[Section~2.4]{vaart2023empirical}.
Therefore,
\[
\sup_{\theta\in\Theta}
\left|
\frac{1}{n}\sum_{i=1}^n S_d(W_i,\theta)
-
\mathbb{E}\big[S_d(W,\theta)\big]
\right|
\;\xrightarrow{P}\; 0,
\]
and $\mathcal{F}$ is Glivenko--Cantelli.

\subsection{Proof of Theorem 3 (Asymptotic Normality)}

The proof of asymptotic normality for the TIEE estimator $\hat{\theta}_{d,\tau}$ follows the standard approach for semi-parametric Z-estimators. We assume that the true parameters $\theta_{d,\tau}$ and $\zeta$ are fixed, and we emphasise the dependence of the empirical functional $\Phi_{d,n}$ of the estimator of $\zeta$ by writing
\begin{equation*}
\Phi_{d,n}(\theta, \hat{\zeta}) = \frac{1}{n}\sum_{i=1}^{n}g_{\mathrm{TIEE},d}(W(i),\tau,\theta,\hat{\zeta}).
\end{equation*}
As we know, the estimator $\hat{\theta}_{d,\tau}$ is defined as the root of the empirical estimating equation $\Phi_{d,n}(\hat{\theta}_{d,\tau}, \hat{\zeta}) = 0.$
Applying the mean value theorem to $\Phi_{d,n}(\theta, \hat{\zeta})$ in $\theta$ around the true parameter $\theta_{d,\tau}$ yields
\begin{equation*}
0 = \Phi_{d,n}(\hat{\theta}_{d,\tau}, \hat{\zeta}) = \Phi_{d,n}(\theta_{d,\tau}, \hat{\zeta}) + (\hat{\theta}_{d,\tau}-\theta_{d,\tau}) \cdot \frac{\partial}{\partial \theta} \Phi_{d,n}(\tilde{\theta}, \hat{\zeta}),
\label{eq:taylor_expansion}
\end{equation*}
where $\tilde{\theta}$ is an intermediate value between $\hat{\theta}_{d,\tau}$ and $\theta_{d,\tau}$.
Rearranging terms, we get
\begin{equation}
\sqrt{n}(\hat{\theta}_{d,\tau}-\theta_{d,\tau}) = -\frac{\sqrt{n}\Phi_{d,n}(\theta_{d,\tau}, \hat{\zeta})}{\frac{\partial}{\partial \theta} \Phi_{d,n}(\tilde{\theta}, \hat{\zeta})}.
\label{eq:asymp_repr}
\end{equation}
By the consistency of $\hat{\theta}_{d,\tau}$ established in Theorem 2, we have $\tilde{\theta} \xrightarrow{P} \theta_{d,\tau}$ and by the assumptions of Theorem 2, $\hat{\zeta} \xrightarrow{P} \zeta$. Furthermore, the assumptions also ensure that the partial derivative $\frac{\partial}{\partial \theta} \Phi_d(\theta, \zeta)$ is continuous in a neighbourhood of $\theta_{d,\tau}$.
The uniform convergence of the empirical derivative to its population counterpart, 
\begin{equation*}
\frac{\partial}{\partial \theta} \Phi_{d,n}(\theta, \zeta) \xrightarrow{P} \frac{\partial}{\partial \theta}\Phi_d(\theta, \zeta),
\end{equation*}
combined with the consistency of the estimated parameters, ensures that
\begin{equation}
\frac{\partial}{\partial \theta} \Phi_{d,n}(\tilde{\theta}, \hat{\zeta}) \xrightarrow{P} \frac{\partial}{\partial \theta}\Phi_d(\theta_{d,\tau}, \zeta).
\label{eq:deriv_convergence}
\end{equation}
Since $\frac{\partial}{\partial \theta}\Phi_d(\theta_{d,\tau}) \neq 0$, the denominator in~\eqref{eq:asymp_repr} converges to a non-zero constant, ensuring the asymptotic normality result is well-defined.

Now we turn our attention to the numerator in~\eqref{eq:asymp_repr}.
Since this is a function of the estimated nuisance parameter $\hat{\zeta}$, we exploit the pathwise differentiability of the moment functional with respect to $\zeta$ to obtain the following asymptotic linear expansion around the true parameter $\zeta$:
\begin{equation}
\sqrt{n} \Phi_{d,n}(\theta_{d,\tau}, \hat{\zeta}) = \sqrt{n} \Phi_{d,n}(\theta_{d,\tau}, \zeta) + \sqrt{n}(\hat{\zeta}-\zeta)^{\top} \cdot \nabla_{\zeta} \Phi_d(\theta_{d,\tau}, \zeta) + o_{p}(1),
\label{eq:expansion_gamma}
\end{equation}
where $\nabla_{\zeta} \Phi_d(\theta_{d,\tau}, \zeta)$ is the gradient of the population functional with respect to $\zeta$, evaluated at the true parameters.
Since $\Phi_d(\theta_{d,\tau}, \zeta) = 0$ is the moment condition identifying the true quantile $\theta_{d,\tau}$, the moment condition is locally unbiased with respect to $\zeta$ at the root (this is a standard property for efficient semi-parametric estimators). 
This condition implies that the term involving the estimation error of $\hat{\zeta}$ vanishes asymptotically.
Therefore, the numerator in~\eqref{eq:expansion_gamma} simplifies to
\begin{equation*}
\sqrt{n} \Phi_{d,n}(\theta_{d,\tau}, \hat{\zeta}) = \sqrt{n} \Phi_{d,n}(\theta_{d,\tau}, \zeta) + o_{p}(1).
\label{eq:simplified_numerator}
\end{equation*}

Substituting the definition of $\Phi_{d,n}(\theta_{d,\tau}, \zeta)$ and using the identification condition $\E[{g_{\mathrm{TIEE},d}}(W(i), \tau, \theta_{d,\tau}, \zeta)] = \Phi_d(\theta_{d,\tau}, \zeta) = 0$, we obtain
\begin{equation*}
\sqrt{n} \Phi_{d,n}(\theta_{d,\tau}, \hat{\zeta}) = \frac{1}{\sqrt{n}} \sum_{i=1}^n {g_{\mathrm{TIEE},d}}(W_i, \tau, \theta_{d,\tau}, \zeta) + o_p(1),
\label{eq:sum_representation}
\end{equation*}
which, by the central limit theorem, converges to a mean-zero Gaussian distribution with finite variance $\sigma_{d,\tau}^{2} = \mathrm{Var}(g_{\mathrm{TIEE},d}(W_i, \tau, \theta_{d,\tau}, \zeta))$.
Combining (via Slutsky's theorem) this convergence result with that of~\eqref{eq:deriv_convergence}, we have
\begin{equation*}
\sqrt{n}(\hat{\theta}_{d,\tau}-\theta_{d,\tau}) = -\frac{\sqrt{n}\Phi_{d,n}(\theta_{d,\tau}, \hat{\zeta})}{\frac{\partial}{\partial \theta} \Phi_{d,n}(\tilde{\theta}, \hat{\zeta})} \xrightarrow{d}- \frac{\mathcal{N}(0, \sigma_{q}^{2})}{\frac{\partial}{\partial \theta}\Phi_d(\theta_{d,\tau}, \zeta)} \equiv \mathcal{N}\left(0, \frac{\sigma_{q}^{2}}{[\frac{\partial}{\partial \theta}\Phi_d(\theta_{d,\tau}, \zeta)]^{2}}\right).
\label{eq:asymp_normality}
\end{equation*}

\subsection{Variance of $\hat{\theta}_{d,\tau}$ when $\nabla_{\zeta}\Phi_d(\theta_{d,\tau},\zeta)\neq 0$}\label{app:variance}

\subsubsection{Derivation of $V_{d,\tau}$}
The standard derivation in Theorem~3 assumes that the estimation of the nuisance parameter $\zeta$ does not affect the asymptotic distribution of $\theta_{d,\tau}$ through the orthogonality condition $\nabla_{\zeta}\Phi_d(\theta_{d,\tau},\zeta)=0$. When this condition fails to hold, we must incorporate the estimation uncertainty of $\hat{\zeta}$ into the calculation of the asymptotic variance $V_{d,\tau}$. This section derives the general variance formula that accounts for the influence of nuisance parameter estimation.

Let $\zeta$ be a $k \times 1$ vector of nuisance parameters. The estimators $(\hat{\theta}_{d,\tau}, \hat{\zeta})$ are obtained by solving the empirical moment conditions
\begin{equation*}
\Phi_{d,n}(\hat{\theta}_{d,\tau},\hat{\zeta}) = \frac{1}{n}\sum_{i=1}^n g_{\mathrm{TIEE},d}(W_i,\hat{\theta}_{d,\tau},\hat{\zeta}) = 0,\quad\text{and}\quad
\Psi_{d,n}(\hat{\zeta}) = \frac{1}{n}\sum_{i=1}^n \psi(W_i,\hat{\zeta}) = 0, \label{eq:moment_gamma}
\end{equation*}
where $\psi(W_i,\hat{\zeta})$ is an estimating function defining the nuisance parameter estimator $\zeta$ through the moment condition $\E[\psi(W_i,\hat{\zeta})] = 0$.
Let the stacked moment function $m$ and its average be defined as
\begin{equation*}
m(W,\theta,\zeta) = \begin{pmatrix} g_{\mathrm{TIEE},d}(W,\theta,\zeta) \\ \psi(W,\zeta) \end{pmatrix}, \qquad \bar{m}_n(\theta,\zeta) = \frac{1}{n}\sum_{i=1}^n m(W_i,\theta,\zeta).
\label{eq:stacked_moment}
\end{equation*}
Under standard regularity conditions, the joint asymptotic distribution of $(\hat{\theta}_{d,\tau}, \hat{\zeta})$ is
\begin{equation*}
\sqrt{n} \begin{pmatrix} \hat{\theta}_{d,\tau}-\theta_{d,\tau} \\ \hat{\zeta}-\zeta \end{pmatrix} \xrightarrow{d} \mathcal{N}(0, V_{\text{full}}),
\label{eq:joint_asymp}
\end{equation*}
where $V_{\text{full}}$ is the sandwich covariance matrix given by $V_{\text{full}} = A^{-1} B (A^{-1})^{\top}$, where $A$ is the Jacobian matrix and $B$ is the covariance matrix.
Specifically, $A$ is the expected gradient of the moment function $m$ with respect to the parameters $(\theta, \zeta)$, i.e.,
\begin{equation*}
A = \E\left[ \frac{\partial m(W,\theta_{d,\tau}, \zeta)}{\partial (\theta, \zeta)} \right]
=
\begin{pmatrix}
\E[\partial_{\theta}g_{\mathrm{TIEE},d}] & \E[\partial_{\zeta}g_{\mathrm{TIEE},d}] \\
\E[\partial_{\theta}\psi] & \E[\partial_{\zeta}\psi]
\end{pmatrix}.
\label{eq:jacobian_full}
\end{equation*}
Under the natural assumption that the estimating equation for $\zeta$ does not depend on $\theta$, we have $\E[\partial_{\theta}\psi] = 0$. Therefore, the Jacobian matrix takes the block upper-triangular form
\begin{equation}
A =
\begin{pmatrix}
\Phi'(\theta_{d,\tau}) & A_{\zeta} \\  
0 & M
\end{pmatrix},
\label{eq:jacobian_block}
\end{equation}
where $\Phi'(\theta_{d,\tau}) = \E[\partial_{\theta}g_{\mathrm{TIEE},d}(W,\theta_{d,\tau},\zeta)]$ is a scalar representing the sensitivity of the moment condition to changes in $\theta$; $A_{\zeta} = \E[\partial_{\zeta}g_{\mathrm{TIEE},d}(W,\theta_{d,\tau},\zeta)] = \nabla_{\zeta}\Phi_d(\theta_{d,\tau},\zeta)$ is a $1 \times k$ row vector capturing the sensitivity to nuisance parameters; and $M = \E[\partial_{\zeta}\psi(W,\zeta)]$ is a $k \times k$ matrix governing the estimation of $\zeta$.

The matrix $B$ is the covariance matrix of the stacked moment functions evaluated at the true parameters, i.e.,
\begin{equation*}
B = \mathrm{Var}(m(W,\theta_{d,\tau},\zeta))
=
\begin{pmatrix}
\mathrm{Var}(g_{\mathrm{TIEE},d}) & \mathrm{Cov}(g_{\mathrm{TIEE},d}, \psi) \\
\mathrm{Cov}(\psi, g_{\mathrm{TIEE},d}) & \mathrm{Var}(\psi)
\end{pmatrix}
=
\begin{pmatrix}
\sigma_{d,\tau}^2 & \Sigma_{g,\psi} \\
\Sigma_{g,\psi}^{\top} & \Sigma_{\psi}
\end{pmatrix},
\label{eq:covariance_matrix}
\end{equation*}
where $\sigma_{d,\tau}^2 = \mathrm{Var}(g_{\mathrm{TIEE},d}(W,\theta_{d,\tau},\zeta))$ is the scalar variance of the moment function for $\theta_{d,\tau}$; $\Sigma_\psi = \mathrm{Var}(\psi(W,\zeta))$ is the $k \times k$ covariance matrix of the moment functions for $\zeta$; and $\Sigma_{g,\psi} = \mathrm{Cov}(g_{\mathrm{TIEE},d}(W,\theta_{d,\tau},\zeta),\psi(W,\zeta))$ is the $1 \times k$ row vector of covariances between the two sets of moment functions.

$V_{d,\tau}$ is defined by the $(1,1)$ block of $V_{\text{full}}$, i.e., $V_{d,\tau} = \mathrm{Var}(\sqrt{n}(\hat{\theta}_{d,\tau}-\theta_{d,\tau}))$.
For a block upper-triangular matrix of the form given in~\eqref{eq:jacobian_block}, the inverse is
\begin{equation}
A^{-1} = \begin{pmatrix} 
(\Phi')^{-1} & -(\Phi')^{-1} A_{\zeta} M^{-1} \\ 
0 & M^{-1} 
\end{pmatrix} 
= \begin{pmatrix} 
U_{11} & U_{12} \\ 
0 & U_{22} 
\end{pmatrix}.
\label{eq:inverse_jacobian}
\end{equation}
$V_{d,\tau}$ can then be computed as
\begin{equation}
V_{d,\tau} = U_{11} B_{11} U_{11}^{\top} + U_{12} B_{21} U_{11}^{\top} + U_{11} B_{12} U_{12}^{\top} + U_{12} B_{22} U_{12}^{\top},
\label{eq:sandwich_expansion}
\end{equation}
where $B_{11} = \sigma_{d,\tau}^2$, $B_{12} = \Sigma_{g,\psi}$, $B_{21} = \Sigma_{g,\psi}^{\top}$, and $B_{22} = \Sigma_\psi$.
Since $V_{d,\tau}$ is a scalar quantity, the second and third terms in~\eqref{eq:sandwich_expansion} are transposes of each other and therefore equal. Substituting the expressions for $U_{11}$ and $U_{12}$ from~\eqref{eq:inverse_jacobian}, we obtain the general variance formula
\begin{equation}
V_{d,\tau} = \frac{\sigma_{d,\tau}^2 + A_{\zeta} V_{\zeta} A_{\zeta}^{\top} - 2 A_{\zeta} C}{[\Phi'(\theta_{d,\tau})]^2},
\label{eq:Vq_general}
\end{equation}
where
$V_{\zeta} = M^{-1}\Sigma_{\psi}(M^{-1})^{\top}$ and $C = M^{-1}\Sigma_{g,\psi}^{\top}$.
Here, $V_{\zeta}$ represents the asymptotic variance of $\hat{\zeta}$, and $C$ captures the scaled covariance between the moment functions $\overline{g}$ and $\psi$.
The formula in~\eqref{eq:Vq_general} contains three components: $\sigma_{d,\tau}^2$, the intrinsic variability of the moment condition for $\theta_{d,\tau}$; $A_{\zeta} V_{\zeta} A_{\zeta}^{\top}$, the variance inflation due to the estimation of nuisance parameters; and $-2 A_{\zeta} C$, a correction term accounting for the correlation between the estimation procedures for $\theta_{d,\tau}$ and $\zeta$.

\subsubsection{Consistent plug-in estimation} 

A consistent plug-in estimator $\hat{V}_{d,\tau}$ can be constructed by replacing all population quantities in~\eqref{eq:Vq_general} with their empirical counterparts, i.e.,
\begin{equation*}
\hat{V}_{d,\tau} = \frac{\hat{\sigma}_{d,\tau}^2 + \hat{A}_{\zeta} \hat{V}_{\zeta} \hat{A}_{\zeta}^{\top} - 2 \hat{A}_{\zeta} \hat{C}}{(\hat{\Phi}')^2},
\label{eq:Vq_plugin}
\end{equation*}
where the empirical quantities are defined as:
\begin{align*}
\hat{\Phi}' &= \frac{1}{n}\sum_{i=1}^n \partial_{\theta} g_{\mathrm{TIEE},d}(W_i,\hat{\theta}_{d,\tau},\hat{\zeta}),\\
\hat{A}_{\zeta} &= \frac{1}{n}\sum_{i=1}^n \partial_{\zeta} g_{\mathrm{TIEE},d}(W_i,\hat{\theta}_{d,\tau},\hat{\zeta}), \\
\hat{M} &= \frac{1}{n}\sum_{i=1}^n \partial_{\zeta}\psi(W_i;\hat{\zeta}),\\
\hat{V}_{\zeta} &= \hat{M}^{-1}\hat{\Sigma}_{\psi}(\hat{M}^{-1})^{\top},\\
\hat{C} &= \hat{M}^{-1}\hat{\Sigma}_{g,\psi}^{\top}.
\end{align*}

The empirical second moments $\hat{\sigma}_{d,\tau}^2$, $\hat{\Sigma}_{\psi}$, and $\hat{\Sigma}_{g,\psi}$ are computed from the centred moment functions as
\begin{align*}
\hat{\sigma}_{d,\tau}^2 &= \frac{1}{n}\sum_{i=1}^n [g_{\mathrm{TIEE},d}(W_i,\hat{\theta}_{d,\tau},\hat{\zeta})]^2,\\
\hat{\Sigma}_{\psi} &= \frac{1}{n}\sum_{i=1}^n \psi(W_i,\hat{\zeta})\psi(W_i,\hat{\zeta})^{\top},\\
\hat{\Sigma}_{g,\psi} &= \frac{1}{n}\sum_{i=1}^n g_{\mathrm{TIEE},d}(W_i,\hat{\theta}_{d,\tau},\hat{\zeta})\psi(W_i,\hat{\zeta})^{\top}.
\end{align*}

Under standard regularity conditions, $\hat{V}_{d,\tau} \xrightarrow{P} V_{d,\tau}$, ensuring consistent variance estimation for constructing confidence intervals and hypothesis tests.

\subsubsection{Special Cases}

The general variance formula in~\eqref{eq:Vq_general} encompasses two important special cases that arise in practice.

\noindent\textbf{Case 1: Orthogonality.} If the orthogonality condition $\nabla_{\zeta}\Phi_d(\theta_{d,\tau},\zeta)=0$ holds, as assumed in Theorem~3, then $A_{\zeta}=0$. Consequently, both the variance inflation term $A_{\zeta}V_{\zeta}A_{\zeta}^{\top}$ and the cross-term $2A_{\zeta}C$ vanish, and the variance simplifies to
\begin{equation*}
V_{d,\tau} = \frac{\sigma_{d,\tau}^2}{[\Phi'(\theta_{d,\tau})]^2},
\label{eq:Vq_orthogonal}
\end{equation*}
recovering the simpler result stated in Theorem~3. This is the most efficient case, as the nuisance parameter estimation introduces no additional variability.

\noindent\textbf{Case 2: Uncorrelatedness.} If $\Sigma_{g,\psi}=0$ then $C=0$ and the cross-term $2A_\zeta C$ disappears. This case can arise, for example, through sample-splitting or cross-fitting procedures that empirically decouple the estimation of $g_{\mathrm{TIEE},d}$ and $\psi$.
In this case, the variance becomes
\begin{equation*}
V_{d,\tau} = \frac{\sigma_{d,\tau}^2 + A_{\zeta} V_{\zeta} A_{\zeta}^{\top}}{[\Phi'(\theta_{d,\tau})]^2}.
\label{eq:Vq_uncorrelated}
\end{equation*}
This case exhibits only the variance inflation due to nuisance parameter estimation, without the potentially beneficial correction from correlated estimation errors.

\subsection{Asymptotic derivations for the extreme quantile treatment effect $\delta(\tau)$}
In this section, we provide detailed derivations supporting the asymptotic distribution of $\delta(\tau) = \theta_{1,\tau} - \theta_{0,\tau}$ provided in Section~\ref{sec:eqte_inference}.
We start by deriving the joint asymptotic distribution of $(\theta_{1,\tau},\theta_{0,\tau})^\top$.
A first order Taylor expansion of the empirical estimating equation $\Phi_{d,n}(\widehat{\theta}_{d,\tau}) = 0$ around $\theta_{d,\tau}$
gives
\[
0 = \Phi_{d,n}(\widehat{\theta}_{d,\tau}) \approx \Phi_{d,n}(\theta_{d,\tau}) + \Phi'_{d,n}(\theta_{d,\tau})(\hat{\theta}_{d,\tau} - \theta_{d,\tau}), \quad d = 0, 1,
\]
which means that $\sqrt{n}(\hat{\theta}_d - \theta_{d,\tau}) \approx -\frac{\sqrt{n} \Phi_{d,n}(\theta_{d,\tau})}{\Phi'_{d,n}(\theta_{d,\tau})}.$
By consistency and uniform convergence, $\Phi'_{d,n}(\theta_{d,\tau})\xrightarrow{P}\Phi'_{d}(\theta_{d,\tau})$, so asymptotically,
\begin{equation*}\label{eq:expansion_joint2}
\sqrt{n}(\hat{\theta}_d - \theta_{d,\tau}) = -\frac{\sqrt{n} \Phi_{d,n}(\theta_{d,\tau})}{\Phi'_{d}(\theta_{d,\tau})} + o_p(1).
\end{equation*}
Now, stacking the two treatment groups together, we get
\[
\sqrt{n}
\begin{pmatrix}
\hat{\theta}_{1,\tau} - \theta_{1,\tau} \\
\hat{\theta}_{0,\tau} - \theta_{0,\tau}
\end{pmatrix}
=
-\,D^{-1}
\sqrt{n}
\begin{pmatrix}
\Phi_{n,1}(\theta_{1,\tau}) \\
\Phi_{n,0}(\theta_{0,\tau})
\end{pmatrix}
+ o_p(1),
\qquad
D = \mathrm{diag}\!\big(\Phi'_1(\theta_{1,\tau}), \Phi'_0(\theta_{0,\tau})\big).
\]

By the multivariate central limit theorem,
\[
\sqrt{n}
\begin{pmatrix}
\Phi_{n,1}(\theta_{1,\tau}) \\
\Phi_{n,0}(\theta_{0,\tau})
\end{pmatrix}
\xrightarrow{d}
\mathcal{N}(0, \Sigma_\Phi),
\]
where
\begin{equation}\label{eq:Sigma_Phi}
\Sigma_\Phi = \begin{pmatrix}
\Sigma_{11} & \Sigma_{10} \\
\Sigma_{10} & \Sigma_{00}
\end{pmatrix}
= \begin{pmatrix}
\text{Var}(g_{1}(W,\theta_{1,\tau})) & \text{Cov}(g_{1}(W,\theta_{1,\tau}),g_{0}(W,\theta_{0,\tau})) \\
\text{Cov}(g_{1}(W,\theta_{1,\tau}),g_{0}(W,\theta_{0,\tau})) & \text{Var}(g_{0}(W,\theta_{0,\tau}))
\end{pmatrix}.
\end{equation}
In~\eqref{eq:Sigma_Phi}, we have reduced notation clutter by denoting $g_{d}(W,\theta_{d,\tau})) = g_{\mathrm{TIEE},d}(W,\theta,\tau,\zeta)$, so $\text{Var}(g_{d}(W,\theta_{d,\tau}))$ is the variance of the integrated estimating function for treatment $d=0,1$, which can be estimated using the sample variance:
\begin{equation}\label{eq:Sigma_dd_estimator}
\hat{\Sigma}_{dd} = \frac{1}{n} \sum_{i=1}^n g_{d}(W(i), \hat{\theta}_{d,\tau})^2 - \left(\frac{1}{n} \sum_{i=1}^n g_{d}(W(i), \hat{\theta}_{d,\tau}) \right)^2 \approx \frac{1}{n} \sum_{i=1}^n g_{d}(W(i), \hat{\theta}_{d,\tau})^2.
\end{equation}
The approximation in~\eqref{eq:Sigma_dd_estimator} comes from the fact that $\widehat{\theta}_{d,\tau}$ solves $\Phi_{d,n}(\widehat{\theta}_{d,\tau}) = 0$, therefore the sample mean is close to 0.
The covariance term $\Sigma_{10}$ captures the dependence between the estimating equations for the two treatment groups. When treatment and control samples are independent (e.g., in randomised experiments with non-overlapping units), $\Sigma_{10} = 0$.
However, in observational studies with IPW adjustment, $\Sigma_{10} \neq 0$ in general because both estimators use the same propensity score model and the same covariate distribution. 
In that case, the covariance can be estimated by
\begin{equation*}\label{eq:Sigma_10_estimator}
\hat{\Sigma}_{10} = \frac{1}{n} \sum_{i=1}^n g_{1}(W(i), \hat{\theta}_{1,\tau}) g_{0}(W(i), \hat{\theta}_{0,\tau})  - \left(\frac{1}{n} \sum_{i=1}^n g_{1}(W(i), \hat{\theta}_{1,\tau}) \right)\left(\frac{1}{n} \sum_{i=1}^n g_{0}(W(i), \hat{\theta}_{0,\tau}) \right).
\end{equation*}

Therefore, the joint re-scaled asymptotic distribution of $(\widehat{\theta}_{1,\tau},\widehat{\theta}_{0,\tau})^\top$ is given by
\[
\sqrt{n}
\begin{pmatrix}
\hat{\theta}_{1,\tau} - \theta_{1,\tau} \\
\hat{\theta}_{0,\tau} - \theta_{0,\tau}
\end{pmatrix}
\xrightarrow{d}
\mathcal{N}\!\left(0, D^{-1} \Sigma_\Phi D^{-1}\right).
\]
The asymptotic distribution of $\hat{\delta}(\tau)$ is therefore $\sqrt{n}\big(\hat{\delta}(\tau) - \delta(\tau)\big)
\xrightarrow{d}
\mathcal{N}\!\left(
0,\; \sigma_\delta^2\right),$
where the asymptotic variance $\sigma_\delta^2 = (1,-1)\, D^{-1}\Sigma_\Phi D^{-1}\,(1,-1)^\top$ simplifies to
\begin{equation}\label{eq:eqte_variance}
    \sigma_\delta^2
=
\frac{\Sigma_{11}}{[\Phi'_1(\theta_{1,\tau})]^2}
+
\frac{\Sigma_{00}}{[\Phi'_0(\theta_{0,\tau})]^2}
-
\frac{2\Sigma_{10}}{\Phi'_1(\theta_{1,\tau})\Phi'_0(\theta_{0,\tau})}.
\end{equation}
A consistent estimator $\hat{\sigma}_\delta^2$ of~\eqref{eq:eqte_variance} is obtained by substituting the terms on the right-hand side by sample analogues.
The terms in the denominators can either be estimated via numerical differentiation or kernel density estimation of $f_{Y_d}(\hat{\theta}_{d,\tau})$ (since $\Phi'_d(\theta_{d,\tau}) = f_{Y_d}(\hat{\theta}_{d,\tau})$).
An asymptotically valid $(1 - \alpha)$ confidence interval for $\widehat{\delta}(\tau)$ is therefore given by $\widehat{\delta}(\tau) \pm z_{1-\alpha/2} {\hat{\sigma}_\delta}/{\sqrt{n}},$
where $z_{1-\alpha/2}$ denotes the $(1 - \alpha/2)$-quantile of the standard normal distribution.

\section{Additional simulation results and studies}\label{app:sensitive}
\subsection{Performance under heavy- and light-tailed scenarios with $n=5000$}\label{app:n5000}
The results for sample size $n=5000$ presented in Table~\ref{tab:heavy_light_tail_5000} exhibit the same patterns as those reported for $n=1000$. In particular, the TIEE-IPW estimator continues to outperform competing methods across all tail regimes, with the performance gains most pronounced in the extreme tail. As expected, increasing the sample size leads to overall reductions in bias and MSE for all estimators, but does not alter the relative ranking of the methods. These findings further support the robustness and scalability of the proposed approach.

\begin{table}[htbp]
\centering
\scriptsize
\caption{Estimation performance (Bias and MSE) under heavy- and light-tailed scenarios for sample size $n=5000$.}
\label{tab:heavy_light_tail_5000}
\resizebox{\textwidth}{!}{%
\begin{tabular}{llrrrrrr}
\toprule
\multicolumn{8}{c}{Heavy-tailed scenarios} \\
\midrule
 & & \multicolumn{2}{c}{Scenario $M_1^{\text{(H)}}$} & \multicolumn{2}{c}{Scenario $M_2^{\text{(H)}}$} & \multicolumn{2}{c}{Scenario $M_3^{\text{(H)}}$} \\
\cmidrule(lr){3-4} \cmidrule(lr){5-6} \cmidrule(lr){7-8}
$1-\tau_n$ & Method & Bias & MSE & Bias & MSE & Bias & MSE \\
\midrule
\multirow{4}{*}{$5/(n \log n)$}
 & Causal Hill  & 40.165 & 3652.733   & 22.858  & 1537.911    & 35.070  & 3914.050    \\
 & Pickands     & 70.235 & 29395.493  & 130.364 & 202295.849  & 153.724 & 354024.912  \\
 & Zhang-Firpo  & 33.818 & 4655.698   & 42.605  & 14786.807   & 39.325  & 9158.521    \\
 & TIEE-IPW     & 22.842 & \textbf{876.524} & 22.348 & \textbf{996.323} & 21.852 & \textbf{878.432} \\
\cmidrule{1-8}
\multirow{4}{*}{$1/n$}
 & Causal Hill  & 31.616 & 2279.542   & 18.035 & 942.085    & 26.499  & 2181.550    \\
 & Pickands     & 59.719 & 15831.284  & 95.367 & 86841.475  & 107.704 & 139943.738  \\
 & Zhang-Firpo  & 31.558 & 4507.900   & 40.327 & 14597.895  & 35.425  & 8866.979    \\
 & TIEE-IPW     & 19.310 & \textbf{695.274} & 18.341 & \textbf{761.169} & 17.668 & \textbf{632.377} \\
\cmidrule{1-8}
\multirow{4}{*}{$5/n$}
 & Causal Hill  & 8.553  & 182.638  & 5.164 & \textbf{72.501} & 6.128 & 107.085          \\
 & Pickands     & 26.375 & 1308.698 & 20.979 & 1807.624       & 19.685 & 1832.421        \\
 & Zhang-Firpo  & 9.558  & 272.607  & 7.824 & 208.126        & 8.530 & 447.514          \\
 & TIEE-IPW     & 7.061  & \textbf{131.785} & 5.947 & 114.472 & 5.163 & \textbf{75.775} \\
\midrule
\multicolumn{8}{c}{Light-tailed scenarios} \\
\midrule
 & & \multicolumn{2}{c}{Scenario $M_1^{\text{(L)}}$} & \multicolumn{2}{c}{Scenario $M_2^{\text{(L)}}$} & \multicolumn{2}{c}{Scenario $M_3^{\text{(L)}}$} \\
\cmidrule(lr){3-4} \cmidrule(lr){5-6} \cmidrule(lr){7-8}
$1-\tau_n$ & Method & Bias & MSE & Bias & MSE & Bias & MSE \\
\midrule
\multirow{4}{*}{$5/(n \log n)$}
 & Causal Hill  & 17.536  & 339.969 & 6.735  & 71.926  & 0.797  & 0.826  \\
 & Pickands     & -20.353 & 492.972 & -1.767 & 492.544 & -2.236 & 10.139 \\
 & Zhang-Firpo  & -0.770  & 11.042  & 0.196  & 13.396  & -0.386 & 0.322  \\
 & TIEE-IPW     & -1.688 & \textbf{7.863} & -0.683 & \textbf{5.600} & -0.173 & \textbf{0.138} \\
\cmidrule{1-8}
\multirow{4}{*}{$1/n$}
 & Causal Hill  & 13.458  & 201.658 & 4.997  & 40.885  & 0.628  & 0.535 \\
 & Pickands     & -19.186 & 430.554 & -2.139 & 293.848 & -2.118 & 8.627 \\
 & Zhang-Firpo  & 0.316   & 10.549  & 0.847  & 14.074  & -0.262 & 0.242 \\
 & TIEE-IPW     & -1.456 & \textbf{5.861} & -0.486 & \textbf{3.930} & -0.150 & \textbf{0.104} \\
\cmidrule{1-8}
\multirow{4}{*}{$5/n$}
 & Causal Hill  & 4.903   & 28.551  & 1.604  & 5.565  & 0.250  & 0.113 \\
 & Pickands     & -15.105 & 259.551 & -2.340 & 64.948 & -1.685 & 4.994 \\
 & Zhang-Firpo  & -0.169  & 2.942   & -0.116 & 2.348  & -0.044 & 0.070 \\
 & TIEE-IPW     & -0.876 & \textbf{2.081} & -0.076 & \textbf{1.095} & -0.105 & \textbf{0.039} \\
\bottomrule
\end{tabular}%
}
\end{table}

\subsection{Sensitivity to the threshold $u$}\label{app:sensitive_u}

The TIEE estimator's performance depends on the appropriate selection of the intermediate quantile level $p_u$ associated with the threshold $u$. To assess this sensitivity, we fix the sample size at $n=1000$ and consider the target quantile levels $\tau_n$ with $(1-\tau_n)\in \{5/n, 1/n, 5/(n \log n)\}$. We consider $p_u\in[0.85,0.95]$ with increments of 1\% within each treatment group (i.e., $u$ varies over the 85th and 95th percentiles of the control distribution).

Figure~\ref{fig:u_plot} reports the mean squared error of the estimated EQTE ${\delta}(\tau_n)$ (see its definition in Section~\ref{sec:QTE_summary}) as a function of $p_u$.
for different tail probability regimes. As expected, the absolute MSE increases as the target tail probability decreases, reflecting the increasing extremeness of the target quantile and the reduced effective sample size available for tail calibration. 
However, the primary goal of this analysis is to assess the sensitivity of the estimator to the choice of threshold.
From this perspective, we can see that across the range $p_u\in[0.85,0.95]$ the MSE varies smoothly without exhibiting sharp increases or pronounced minima, indicating that the estimator does not depend critically on fine tuning of the threshold. While very high thresholds increase variability due to the limited number of exceedances and very low thresholds may introduce bias from tail-model misspecification, the estimator remains reasonably stable across a broad intermediate range of thresholds.

\begin{figure}[htbp]
    \centering
    \includegraphics[width=0.8\textwidth]{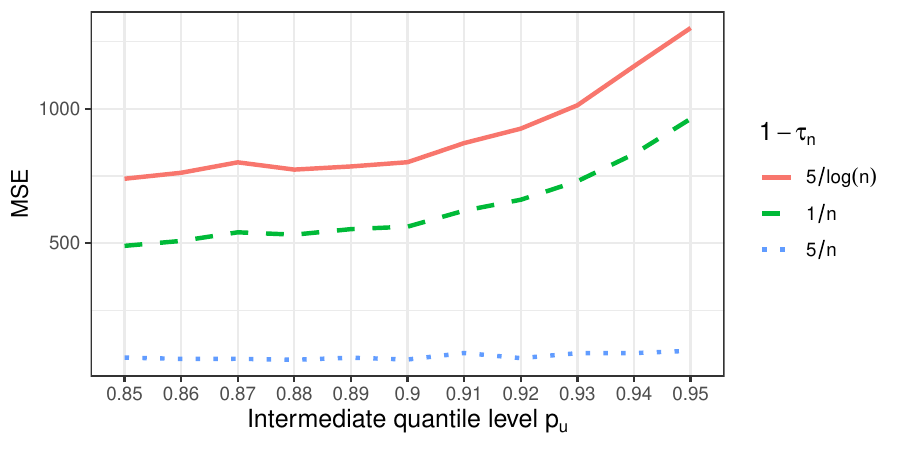} 
    \caption{Mean squared error of the TIEE-IPW estimator of the EQTE $\delta(\tau_n)$ as a function of the intermediate quantile level $p_u$ for $(1-\tau_n)\in\{5/\log(n), 1/n, 5/n\}$.}
    \label{fig:u_plot}
\end{figure}

\subsection{Robustness to grid size $K$\label{sec:ksize}}

Under the same experimental setting as Section~\ref{app:sensitive_u}, we investigate robustness with respect to the grid size used in the numerical integration detailed in Section~\ref{sec:tiee_numerical}. 
We vary the grid resolution over $\{100,200,\dots,2000\}$ while fixing the threshold $u$ at $n^{0.65}$ and keeping all other components unchanged.

Figure \ref{fig:k_plot} examines the robustness of the TIEE-IPW estimator of $\delta(\tau_n)$ to the grid size $K$ under three tail probability regimes.
Across the range of grid resolutions considered, the MSE remains stable, with no systematic deterioration as $K$ increases. 
This indicates that the performance of TIEE is not driven by fine-tuning of the grid resolution and that relatively coarse grids already provide sufficient accuracy. 
The result suggests that the integrated estimating equation is numerically well behaved and that the proposed implementation is robust to the discretisation choices required for practical computation.
\begin{figure}[H]
    \centering
    \includegraphics[width=0.9\textwidth]{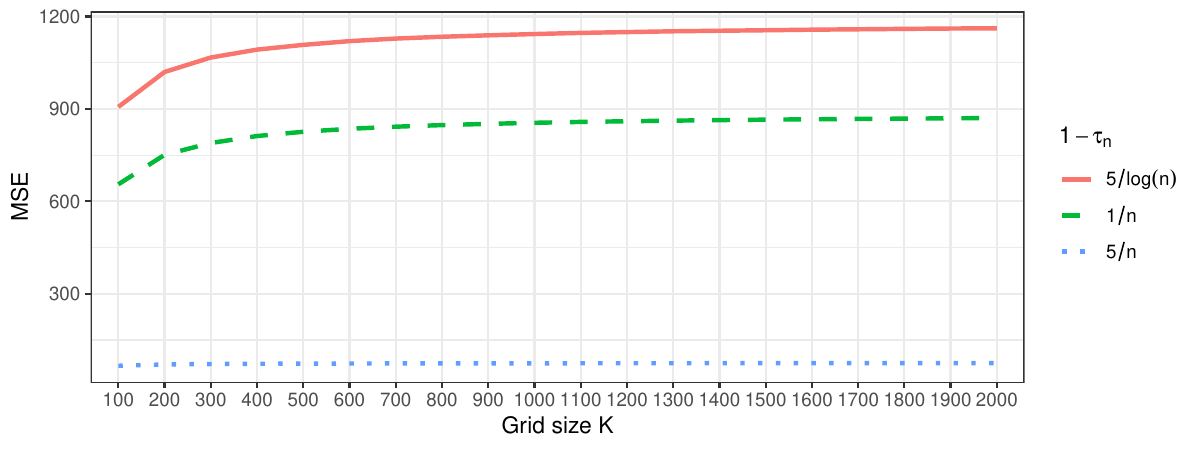} 
    \caption{Mean squared error (MSE) of the TIEE-IPW estimator of the EQTE $\delta(\tau_n)$ as a function of the grid size $K$ for $(1-\tau_n)=5/\log(n), 1/n, 5/n$.} 
    \label{fig:k_plot}
\end{figure}

\baselineskip 14pt
\bibliographystyle{CUP}
\bibliography{cite}

\end{document}